\date{}
\def\qed{{\unskip\nobreak\hfil\penalty50
\hskip2em\hbox{}\nobreak\hfil$\square$
\parfillskip=0pt \finalhyphendemerits=0\par}\medskip}
\def\proof{\trivlist \item[\hskip \labelsep{\bf Proof.\ }]}
\def\endproof{\null\hfill\qed\endtrivlist\noindent}
\def\Ad{{\hbox{\rm Ad}}}
\def\l{\lambda}
\def\A{{\cal A}}
\def\B{{\cal B}}
\def\C{{\cal C}}
\def\D{{\cal D}}
\def\E{{\cal E}}
\def\I{{\cal I}}
\def\H{{\cal H}}
\def\K{{\cal K}}
\def\S{{\cal S}}
\def\f{{\varphi}}
\def\Si{\mathbb S_{\infty}}
\def\S2{S^{1(2)}}
\def\Ed{{\cal E_{\delta}}}
\newtheorem{theorem}{Theorem}[section]
\newtheorem{lemma}[theorem]{Lemma}
\newtheorem{corollary}[theorem]{Corollary}
\newtheorem{proposition}[theorem]{Proposition}
\theoremstyle{definition} 
\theoremstyle{remark}
\def\Hin{\mathbb H^{\infty}(\mathbb S_{\infty})}
\def\RR{{\mathbb R}}
\def\CC{{\mathbb C}}
\def\sl2{{{\rm SL}(2,\RR)}}
\def\psl2{{{\rm PSL}(2,\RR)}}
\def\u1{{{\rm V}(1)}}
\def\su2{{{\rm SV}(2)}}
\def\so3{{{\rm SO}(3)}}
\def\A{{\mathcal A}}
\def\B{{\mathcal B}}
\def\C{{\mathcal C}}
\def\D{{\mathcal D}}
\def\H{{\mathcal H}}
\def\I{{\mathcal I}}
\def\K{{\mathcal K}}
\def\O{{\mathcal O}}
\def\LH{{\mathfrak H_R}}
\newcommand{\be}{\begin{equation}} 
\newcommand{\ee}{\end{equation}}
\newcommand{\bea}{\begin{eqnarray}} 
\newcommand{\ea}{\end{eqnarray}}
\newcommand{\inv}{^{-1}}
\begin{document}

\title{\huge Boundary Quantum Field Theory\\
on the Interior of the Lorentz Hyperboloid}
 
\author{{\sc Roberto Longo}
\\
Dipartimento di Matematica,
Universit\`a di Roma ``Tor Vergata'',\\
Via della Ricerca Scientifica, 1, I-00133 Roma, Italy
\\
\phantom{X}\\
{\sc Karl-Henning Rehren} \\
Institut f\"ur Theoretische Physik, Universit\"at G\"ottingen,
\\ Friedrich-Hund-Platz 1, D-37077 G\"ottingen, Germany}

\maketitle

\begin{abstract} We construct local, boost covariant boundary QFT nets
  of von~Neumann algebras on the interior of the Lorentz hyperboloid
  $\LH$, $x^2 - t^2 > R^2$, $x>0$, in the two-dimensional Minkowski
  spacetime. Our first construction is canonical, starting with a local
  conformal net on $\RR$, and is analogous to our previous
  construction of local boundary CFT nets on the Minkowski
  half-space. This net is in a thermal state at Hawking
  temperature. Then, inspired by a recent construction by E. Witten
  and one of us, we consider a unitary semigroup that we use to build
  up infinitely many nets. Surprisingly, the one-particle semigroup is
  again isomorphic to the semigroup of symmetric inner functions of
  the disk. In particular, by considering the $U(1)$-current net, we
  can associate with any given symmetric inner function a local, boundary QFT net on $\LH$. By considering different states, we shall also have nets in a ground state, rather than in a KMS state.
\end{abstract}

\vskip5cm

\noindent
----------------

\noindent
{\footnotesize Supported by the ERC Advanced Grant 227458  
OACFT ``Operator Algebras and Conformal Field Theory", PRIN-MIUR,
GNAMPA-INDAM and EU network ``Noncommutative Geometry"
MRTN-CT-2006-0031962. Supported in part by the German
    Research Foundation (Deutsche Forschungsgemeinschaft (DFG))
    through the Institutional Strategy of the University of
    G\"ottingen.

\noindent
E-mail: {\tt longo@mat.uniroma2.it, rehren@theorie.physik.uni-goettingen.de}}

\newpage

\section{Introduction}
An algebraic description of Boundary CFT on the two-dimensional
Minkowski half-space $x>0$ has been given in \cite{LR1}. Recently an
infinite new family of Boundary QFT on the half-space has been set up
in \cite{LW} by modifying this construction by elements of a unitary
semigroup, a construction conceptually related to the inverse
scattering method, in the sense that our first models are associated
with (symmetric) scattering functions. 
 
In two dimensions, since locally every Lorentz geometry is conformally
flat, boundaries in Minkowski spacetime may be regarded as an analogue
of curvature and horizons in higher dimensions. Considering QFT with and
without boundaries therefore is a testing ground for the generally
covariant locality principle \cite{BFV}, a new paradigm for local
quantum field theory stipulating a simultaneous definition on flat and
curved spacetimes.  

As pointed out to us by Edward Witten, the same question arises in
open string theory, notably in the context of gauge-gravity
(open-closed) duality. In his words, ``studying all possible
extensions to manifolds with boundary of a specific 2d theory defined
initially on manifolds without boundary is the analog of studying
Yang-Mills theory in the background of a specified solution of
Einstein's equations.'' His suggestion that this is a question the
algebraic approach to QFT might shed light on, had motivated our
previous work \cite{LW}. 


In this paper we then study local, relativistic boundary CFT and QFT 
on the interior $\LH$ of the Lorentz hyperboloid $x^2 - t^2= R^2$ ($x>0$), 
in the Minkowski plane. For simplicity we put $R=1$ in this section. 

A two-dimensional CFT on $\LH$ is a local QFT with a conserved and
traceless stress-energy tensor, subject to a boundary condition at
the boundary $x^2 - t^2=1$. As is well known,  
conservation and vanishing of the trace imply that the components 
$T_L=\frac12(T_{00}+ T_{01})$ and $T_R=\frac12(T_{00}- T_{01})$ are
chiral fields, $T_L(u)$, $T_R(-v)$, where we are using light-cone
coordinates $u= x+t$, $v=x-t$. The boundary condition is the vanishing
of energy flux across the boundary, $T^{0\mu}\epsilon_{\mu\nu}dx^\nu =
0$, which in components becomes $(t+x) T_L(t+x) = (t-x) T_R(t-x)$,
namely $uT_L(u)|_{uv=1} =  -vT_R(-v)|_{uv=1}$, so
\[
uT_L(u) =- \frac{1}{u}T_R\left( -\frac{1}{u}\right) \equiv T(u)\ .
\]
It follows that the components $T_{10}=T_{01}$, $T_{11}=T_{00}$ of the
stress-energy tensor are of the form 
\begin{equation} 
T_{00}(u,v) =  \frac{1}{u}T(u) - vT\left(- \frac{1}{v}\right), \qquad T_{01}(u,v) =  \frac{1}{u}T(u) + vT\left(- \frac{1}{v}\right), 
\end{equation}
i.e., bi-local expressions in terms of the chiral field $T$. 

In terms of the local von~Neumann algebras $\A(\O)$ generated by the stress-energy tensor, this means that if $I,J$ are bounded intervals of $\RR^+$ with $\O= I\times J$ a double cone contained in $\LH$ (thus $uv>1;\, u\in I, v\in J$) we have
\begin{equation}\label{main}
\A(\O) = \A_0(I)\vee \A_0(J^{-1}) \ ,
\end{equation}
where $\A_0$ is the net on $\RR$ generated by the chiral stress-energy tensor (Virasoro net).
Indeed only the restriction $\A_0|_{\RR^+}$ of $\A_0$ to the positive half-line enters here.

Dilation covariance of $\A_0|_{\RR^+}$ gives boost covariance of $\A$ and the KMS property of the vacuum state on $\A_0(\RR^+)$ (Bisognano-Wichmann property) gives the KMS property of the vacuum state on $\A(\LH)$ w.r.t.\ the boosts at Hawking-Unruh inverse temperature $\beta= 2\pi$.

In more generality, starting with any dilation covariant local net $\A_0$ of von~Neumann algebras on $\RR^+$ we may associate by formula \eqref{main} a local boost covariant net of von~Neumann algebras on $\LH$. Of course, we could also start with a local translation covariant net on $\RR$ as there is a one-to-one correspondence between local translation covariant nets on $\RR$ and local dilation covariant nets on $\RR^+$ by the change of variable $x\leftrightarrow e^x$. 

Now we can extend the above canonical construction \eqref{main} based on two local dilation covariant nets $\A_0$ and $\B_0$ on $\RR^+$ such that $\B_0$ is forwardly local w.r.t.\ $\B_0$, i.e., $\A_0(I_1)$ commutes with $\B_0(I_2)$ if $I_2 > I_1$:
\[
\A(\O) = \B_0(I)\vee \A_0(J^{-1}) \ ,
\]
where $J^{-1} = \{1/v: v\in J\}$. The option to have different $\A_0$
and $\B_0$ is interesting not least in string theory, because the ``heterotic
string'' is of that kind (different internal symmetry groups for
left- and right-moving modes, but equal central charge; hence $\A_0$
and $\B_0$ share a common subnet of the stress-energy tensor). they
are both conformal, but not related by a unitary. 

However, in analogy with the model building in \cite{LW}, we may start
with a local dilation covariant net $\A_0$ and consider the semigroup
$\E_\delta(\A_0)$ of unitaries $V$ commuting with dilations such that 
\[
V\A_0(1,\infty)V^*\subset \A_0(1,\infty)\ ,
\]
Thus $V\A_0(a,\infty)V^*\subset \A_0(a,\infty)$ for all $a>0$.
Every element $V\in\E_\delta(\A_0)$ then gives a dilation covariant
net $\B_0 = V\A_0 V^*$ that is forwardly local w.r.t.\ $\A_0$, thus a
local, boost covariant, Boundary QFT net of von~Neumann algebras on
$\LH$ by the formula 
\[
\A(\O) = V\A_0(I)V^*\vee \A_0(J^{-1}) \ .
\]
At this point we may produce families of local, boost covariant nets
on $\LH$ once we compute non-trivial elements of $\E_\delta(\A_0)$ for
a given local, M\"obius covariant net on $\RR$.  

We shall study the semigroup $\E_\delta(\A^{0})$ with $\A^{0}$ the
$U(1)$-current net on $\RR$. One of our main result is the computation
of the sub-semigroup of $\E_\delta(\A^{0})$ consisting of second
quantization unitaries (unitaries $V$ that are promotions of
one-particle unitaries $V_0$); we have 
\[
V =\Gamma(V_0)\in\E_\delta(\A^{0}) \Leftrightarrow V_0 = \f(K),\ \f\
\text{symmetric inner function on the upper half-plane} 
\]
where $K$ is the one-particle generator of the dilation semigroup.

Thus, rather surprisingly, the above sub-semigroup is naturally
isomorphic to the semigroup of inner functions as was the case for the
Boundary QFT on the half-space, where positivity of the energy played
a crucial role; that role is here played by the KMS thermal
equilibrium property. 

\section{Basic definitions}\label{basicdef} 

Let $M$ be the two-dimensional Minkowski spacetime and fix $R>0$. We shall consider the spacetime $\LH=\{(t,x)\in M: x^2 - t^2 > R^2,\, x>0\}$, the interior of the Lorentz hyperboloid  $x^2 - t^2 = R^2,\, x>0$. 
Then $\LH$ inherits the Lorentz metric from $M$.

The Lorentz boosts provide a one-parameter group of diffeomorphisms $\Lambda$ of $\LH$: 
\[
\Lambda(s) =
\left(\begin{array}{cc}\cosh s & \sinh s \\ \sinh s &\cosh s\end{array}\right)\ .
\]
In light-cone coordinates $u = x+ t$, $v= x-t$, the Lorentz hyperboloid is given by $uv=R^2$ and $\LH$ is the region $uv>R^2$, $u>0$.

We shall denote by $\K$ the set of double cones strictly contained in $\LH$, namely $\O\in \K$ if $\O = I_1\times I_2$ with $I_1$ and $I_2$ bounded intervals of the chiral lines $u=0$, $v=0$ and $\bar\O\subset \LH$: so  $\O= \{(u,v): u\in I_1, v\in I_2, uv>R^2\}$ and $uv>R^2$ for all $u\in\bar I_1$, $v\in \bar I_2$.

A local \emph{net $\A$ of  von~Neumann algebras on $\LH$} is a map
\[
\O\mapsto\A(\O)
\]
from $\K$ to the set of von~Neumann algebras on a (fixed) Hilbert space $\H$ that satisfies the following properties:
\begin{description}
\item[$\textnormal{\textsc{1. Isotony}}$:] {\it If $\O_1$, $\O_2$ are double cones and $\O_1\subset \O_2$, then $\A(\O_1)\subset\A(\O_2)$.}
\item[\textnormal{\textsc{2. Boost invariance}}:] {\it There is a
strongly continuous one-pameter unitary group $U$ on $\H$ such that
$U(s)\A(\O)U(s)^*=\A(\Lambda(s)\O)$, $\ t\in \RR$,$\ \O\in\K$.}
\item[$\textnormal{\textsc{3. Locality}}$:]
{\it If $\O_1, \O_2\in\K$ are spacelike separated, the von~Neumann algebras $\A(\O_1)$ and $\A(\O_2)$ commute.}
\item[\textnormal{\textsc{4. Invariant state}}:]
{\it There exists a unit $U$-invariant vector $\xi$, cyclic for $\A(\LH)\equiv\bigvee_{\O\in\K}\A(\O)$}.
\end{description}
If the generator of the one-parameter group is positive, we shall say
that $\xi$ is a \emph{vacuum vector}, the state $(\xi,\cdot\,\xi)$ a vacuum (equivalently: ground) state, and that the net is in a vacuum (equivalently: ground) representation.

If the one-parameter automorphism group $\tau =\Ad U$ of $\A(\LH)$  satisfies the KMS condition w.r.t.\ $(\xi,\cdot\,\xi)$, we shall say that $\A$ is in a KMS representation.

\subsection{Translation and dilation covariant nets on $\RR$}
\emph{A local, translation (resp.\ dilation) covariant net of von~Neumann algebras on $\RR$ (resp.\ on $\RR^+$)} on a Hilbert space $\H$ is a triple $(\A_0, U, \xi)$ where
\begin{itemize}
\item $\A$ is an isotonous map 
\[
I\mapsto \A_0(I)
\] 
where $I\in\I\ (\text{resp.}\ I\in\I_+)$ and $\A_0(I)$ is a von~Neumann algebra on $\H$;
\item
$U$ is a one-parameter group on $\H$ such that
\[
U(t)\A_0(I)U(-t) = \A_0(I+ t),\ (\text{resp.}\ \A_0(e^t I)),\quad \forall I\in\I \ (\text{resp.} \ \I_+) ,\ t\in\RR;
\]
\item
$\xi\in\H$ is a unit $U$-invariant vector, cyclic for $\bigcup_{I} \A_0(I)$.

\item $\A_0(I_1)$ and $\A_0(I_2)$ commute if $I_1,I_2$ are disjoint intervals.
\end{itemize}
Here $\I$ (resp $\I_+$) is the family of bounded, open, non-empty intervals of $\RR$ (resp of $\RR^+$ with $0\notin I$).
We do not assume positivity of the energy nor irreducibility of the net.

If the last condition is not satisfied, the net is called {\em nonlocal}. 

Note that there is a one-to-one correspondence
\begin{equation}\label{diltrans}
\begin{gathered}
 \text{translation covariant nets on $\RR$}\\
 \updownarrow\\
\text{dilation covariant nets on $\RR^+$}
\end{gathered}
\end{equation}
 simply by the ``change of variable'' $x\leftrightarrow e^x$.

We shall say that the translation covariant nets $\A$ and $\A_1$ are abstractly isomorphic (resp.\ isomorphic) if there is a coherent family of isomorphisms $\Phi_I : \A_1(I)\to\A(I)$, $I\in \I$, interchanging the translation action (and the invariant state). Analogous notions can be given for dilation covariant nets.

Let $\A$ be a local M\"obius covariant net of von~Neumann algebras on $\RR$.
Denote by $U$ (resp.\ $V$) the one-parameter unitary translation (resp.\ dilation) group on $\H$. Then $\A$ is a translation covariant net on $\RR$, and the restriction of $\A$ to $\RR^+$ is a local dilation covariant net on $\RR^+$ (w.r.t.\ the vacuum vector).
We recall the following fact from \cite{CLTW1}:
\begin{proposition}\label{CLTW}
If $\A_0$ is a diffeomorphism covariant local net on $\RR$ and $\A_1$ the translation covariant net on $\RR$ associated with $\A_0|_{\RR^+}$ by the above correspondence, then $\A_1$ and $\A_0$ are abstractly isomorphic as translation covariant nets.  
\end{proposition}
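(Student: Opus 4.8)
The plan is to realize the abstract isomorphisms $\Phi_I:\A_1(I)\to\A_0(I)$ as the adjoint actions of the unitaries implementing the change of variable $\log$, exploiting that $\A_0$, being diffeomorphism covariant, carries a projective unitary representation $U$ of $\Diff(S^1)$ with $U(\gamma)\A_0(I)U(\gamma)^*=\A_0(\gamma I)$ and with the locality property that $U(\gamma)$ commutes with $\A_0(L)$ whenever $\gamma$ is the identity on an open set containing $\overline L$. By construction $\A_1(I)=\A_0(e^I)$ for a bounded interval $I\subset\RR$, where $e^I\subset\RR^+$, and the translation group of $\A_1$ is precisely the dilation group $U(\delta_t)$, $\delta_t(x)=e^t x$, of $\A_0$ on $\RR^+$; the identity $\log\circ\,\delta_t=\tau_t\circ\log$, with $\tau_t(x)=x+t$, says that $\log$ conjugates this dilation flow to the translation flow on $\RR$.

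First I would fix a bounded interval $I$ and choose $\gamma\in\Diff(S^1)$ agreeing with $\log$ on a neighborhood of $\overline{e^I}$; such a $\gamma$ exists because $\overline{e^I}$ is a compact subarc of the proper arc $(0,\infty)\subset S^1$ and $\log$ is an orientation-preserving diffeomorphism of a neighborhood onto a subarc of $\RR\subset S^1$, which can be extended (capped off) to a global diffeomorphism. I would then set $\Phi_I:=\Ad\,U(\gamma)|_{\A_0(e^I)}$, an isomorphism onto $\A_0(\gamma(e^I))=\A_0(I)$. The first point to verify is well-definedness: if $\gamma_1,\gamma_2$ both agree with $\log$ near $\overline{e^I}$, then $\gamma_2^{-1}\gamma_1$ is the identity on a neighborhood of $\overline{e^I}$, so $U(\gamma_2^{-1}\gamma_1)$ commutes with $\A_0(e^I)$ and hence $\Ad\,U(\gamma_1)=\Ad\,U(\gamma_2)$ on $\A_0(e^I)$.

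Coherence for $I_2\subset I_1$ is then immediate, since a single $\gamma$ adapted to $\overline{e^{I_1}}$ computes both $\Phi_{I_1}$ and $\Phi_{I_2}$, so $\Phi_{I_1}$ restricts to $\Phi_{I_2}$. For the intertwining of translations I would argue at the level of diffeomorphisms: choosing $\gamma,\gamma'$ adapted to $\overline{e^I}$ and $\overline{e^{I+t}}$, one checks that $\gamma'\delta_t$ and $\tau_t\gamma$ both coincide with $\tau_t\circ\log$ on a neighborhood of $\overline{e^I}$ (this is where the conjugation identity $\log\circ\,\delta_t=\tau_t\circ\log$ enters). By the well-definedness just established their adjoint actions agree on $\A_0(e^I)$, which is exactly $\Phi_{I+t}\circ\Ad\,U(\delta_t)=\Ad\,U(\tau_t)\circ\Phi_I$, the required intertwining. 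Since each $\Phi_I$ is implemented by a genuine unitary, this yields the abstract isomorphism of translation covariant nets; the implementing unitaries need not fix $\xi$, which is why only abstract (and not state-preserving) isomorphism is claimed.

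The main obstacle, and the reason a one-line ``conjugate by the exponential'' argument fails, is that $\log$ does not extend to a diffeomorphism of $S^1$: the dilations are hyperbolic (two fixed points $0,\infty$) while the translations are parabolic (one fixed point), so no element of $\Diff(S^1)$, and in particular none of $\psl2$, conjugates one flow to the other globally. The remedy is to avoid any single global unitary and to use only locally adapted diffeomorphisms, leaning on the locality of the implementing representation to make the family $\{\Phi_I\}$ well-defined and coherent. The one genuinely technical input is the fact that a diffeomorphism trivial near $\overline L$ is implemented by a unitary commuting with $\A_0(L)$; for a net generated by a stress-energy tensor this is the standard locality of the Virasoro/diffeomorphism representation.
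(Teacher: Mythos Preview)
The paper does not actually prove this proposition; it merely cites it from \cite{CLTW1} (Camassa--Longo--Tanimoto--Weiner). So there is no ``paper's own proof'' to compare against here.

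Your argument is correct and is essentially the standard one for this type of result (and, as far as I know, matches the approach in \cite{CLTW1}). The key structural insight---that $\log$ cannot be realized by a single element of $\Diff(S^1)$ because it would have to conjugate a hyperbolic flow to a parabolic one, so one must work with locally adapted diffeomorphisms and rely on the locality of the projective representation to glue---is exactly right, and your execution (well-definedness via $U(\gamma_2^{-1}\gamma_1)\in\A_0(e^I)'$, coherence by using a common $\gamma$, intertwining via $\log\circ\delta_t=\tau_t\circ\log$) is clean. The only assumption you are invoking beyond the bare statement is that ``diffeomorphism covariant'' includes the locality property $U(\gamma)\in\A_0(L)'$ whenever $\gamma$ is the identity near $\overline L$; this is standard in the conformal net literature and is part of the usual definition, so no gap there.
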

\noindent
Therefore one can carry the vacuum state for $\A_0|_{\RR^+}$ to a translation invariant state for $\A_0$. This is the \emph{geometric KMS state}, a canonical KMS state for $\A_0$.

\subsection{Forwardly local chiral nets}
Let $\A_0$ and $\B_0$ be local nets of von~Neumann algebras on $\RR$ on the same Hilbert space. We shall also assume that both $\A_0$ and $\B_0$ are covariant w.r.t.\ the same one-parameter translation group. 
We shall say that \emph{$\B_0$ is forwardly local w.r.t.\ $\A_0$} if
$\A_0(I_1)$ commutes with $\B_0(I_2)$ for all intervals $I_1 , I_2$ of
$\RR$ such that $I_2 > I_1$ ($I_2$ is in the future of $I_1$). 
We define {\em backwardly local} in the obvious way
($I_2<I_1$). $\B_0$ is {\em relatively local} w.r.t.\ $\A_0$ if it is
both forwardly and backwardly local. 

We shall say that duality for half-lines holds for $\A_0$ if 
\[
\A_0(I)'\cap\A_0(\RR)=\A(I')\ ,
\]
where $I\subset\RR$ is any half-line. The following is immediate.
\begin{lemma}
If $\B_0$ is forwardly local w.r.t.\ $\A_0$ and duality for half-lines holds for $\A_0$, then $\B_0(a,\infty)\subset\A_0(a,\infty)$, $a\in\RR$.
\end{lemma}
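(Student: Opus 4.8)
The plan is to rewrite the target half-line algebra $\A_0(a,\infty)$ by means of duality and then feed in forward locality. Take $I=(-\infty,a)$, so that its complementary half-line is $I'=(a,\infty)$; duality for half-lines for $\A_0$ gives
\[
\A_0(a,\infty)=\A_0\big((-\infty,a)\big)'\cap\A_0(\RR)\ .
\]
It therefore suffices to establish the two inclusions $\B_0(a,\infty)\subseteq\A_0\big((-\infty,a)\big)'$ and $\B_0(a,\infty)\subseteq\A_0(\RR)$ and then intersect.

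For the first inclusion I would show that $\B_0(a,\infty)$ commutes with $\A_0\big((-\infty,a)\big)$ by reducing to bounded intervals and invoking the hypothesis. Exhaust the half-lines from the inside, $\B_0(a,\infty)=\bigvee_{I_2}\B_0(I_2)$ over bounded $I_2$ with $\bar I_2\subset(a,\infty)$, and $\A_0\big((-\infty,a)\big)=\bigvee_{I_1}\A_0(I_1)$ over bounded $I_1$ with $\bar I_1\subset(-\infty,a)$. Any such pair satisfies $\max I_1<a<\min I_2$, that is $I_2>I_1$, so forward locality of $\B_0$ w.r.t.\ $\A_0$ yields $[\A_0(I_1),\B_0(I_2)]=0$. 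Hence each $\B_0(I_2)$ lies in $\A_0(I_1)'$ for every admissible $I_1$, whence $\B_0(I_2)\subseteq\bigcap_{I_1}\A_0(I_1)'=\A_0\big((-\infty,a)\big)'$; taking the von~Neumann algebra generated by the $\B_0(I_2)$ gives $\B_0(a,\infty)\subseteq\A_0\big((-\infty,a)\big)'$. This is the substantive use of the hypotheses, and it is essentially formal once the interval bookkeeping is arranged.

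The second inclusion, $\B_0(a,\infty)\subseteq\A_0(\RR)$, is the point that needs care, because in a \emph{reducible} situation $\A_0\big((-\infty,a)\big)'$ is strictly larger than $\A_0(a,\infty)$, and it is precisely the intersection with $\A_0(\RR)$ that cuts it down to the correct size. In the setting in which the lemma is applied, $\B_0$ arises as $\B_0=V\A_0V^*$ with $V$ a unitary satisfying $V\A_0(b,\infty)V^*\subseteq\A_0(b,\infty)$ for all $b$; then $\B_0(a,\infty)=V\A_0(a,\infty)V^*\subseteq\A_0(a,\infty)\subseteq\A_0(\RR)$, so the inclusion is automatic. (More generally one takes as a standing assumption that $\B_0(\RR)\subseteq\A_0(\RR)$, i.e.\ that both nets are generated inside the global algebra $\A_0(\RR)$; when $\A_0$ is irreducible this holds trivially, $\A_0(\RR)=B(\H)$, and the duality even simplifies to $\A_0\big((-\infty,a)\big)'=\A_0(a,\infty)$.) I expect this ambient-containment point to be the only genuine obstacle, since without it one cannot distinguish $\B_0$ from pieces sitting in the commutant of $\A_0$; everything else is the immediate combination of duality with forward locality.

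Combining the two inclusions with the displayed duality identity gives
\[
\B_0(a,\infty)\subseteq\A_0\big((-\infty,a)\big)'\cap\A_0(\RR)=\A_0(a,\infty)\ ,
\]
which is the assertion.
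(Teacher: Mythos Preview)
Your argument is essentially the same as the paper's: use forward locality to get $\B_0(a,\infty)\subset\A_0(-\infty,a)'$, then invoke half-line duality. The paper compresses this into one line and does not spell out the approximation by bounded intervals. You are right to flag the inclusion $\B_0(a,\infty)\subset\A_0(\RR)$ as the only nontrivial point; the paper's proof simply writes ``so is contained in $\A_0(-\infty,a)'\cap\A_0(\RR)$'' without further comment, tacitly treating this containment as part of the standing setup (and indeed in the applications $\B_0=V\A_0V^*$ or $\A_0$ is irreducible, exactly as you note).
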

\proof 
$\B_0(a,\infty)$ commutes with $\A_0(-\infty, a)$, so is contained in $\A_0(-\infty, a)'\cap\A_0(\RR) = \A_0(a,\infty)$.
\endproof
If duality for half-lines holds for both $\A_0$ and $\B_0$, and $\A_0$ and $\B_0$ are relatively local, then $\B_0(I)=\A_0(I)$ for all half-lines $I\subset\RR$.

It is immediate to translate the above notion for local nets on $\RR^+$ by the correspondence \eqref{diltrans}. If $\A_0$ is a local M\"obius covariant net of von~Neumann algebras on $\RR$, then duality for half-lines holds for $\A_0|_{\RR^+}$ (namely
$\A_0(0,a)'\cap\A_0(0,\infty) = \A_0(a,\infty)$, $a>0$) iff $\A_0$ is strongly additive.

\section{BQFT on the interior of the Lorentz hyperboloid}\label{constr1}
Let $\A_0$ and $\B_0$ be dilation covariant, local nets of von~Neumann algebras on $\RR^+$ with $\B_0$ forwardly local w.r.t.\ $\A_0$. 

Given intervals $I_1, I_2$ of $\RR^+$ we define the von~Neumann algebra $\A(\O)$ associated with the double cone $\O= I_1\times I_2$ by
\begin{equation} \label{bqft}
\A(\O) = \B_0(I_1)\vee \A_0(R^2 I^{-1}_2)
\end{equation}
where $I^{-1} \equiv \{ \l^{-1} : \l\in I\}$.

Formally we can write
\[
\A(u,v) = \B_0(u)\vee \A_0(R^2/v)
\]
and for simplicity we use this formal writing here below; one can easily properly write up by replacing a point $(u,v)\in\LH$ with a double cone $\O\in\K$.
\begin{theorem} 
$\A$ is a local, boost covariant net of von~Neumann algebras on $\LH$. 
\end{theorem}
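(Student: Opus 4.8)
The plan is to verify, one at a time, the four defining properties of a net on $\LH$—isotony, boost covariance, locality, and an invariant cyclic vector—observing that all but locality are formal consequences of the corresponding properties of $\A_0$ and $\B_0$. For isotony I would use that double cones are genuine rectangles in the $(u,v)$-coordinates, so $\O_1=I_1\times I_2\subset\O_2=J_1\times J_2$ is equivalent to $I_1\subset J_1$ and $I_2\subset J_2$; since $v\mapsto v^{-1}$ is an order-reversing bijection of $\RR^+$, this gives $R^2 I_2^{-1}\subset R^2 J_2^{-1}$, and isotony of $\B_0$ and $\A_0$ then yields $\A(\O_1)\subset\A(\O_2)$.

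For boost covariance I would take $U$ to be the common one-parameter dilation group of $\A_0$ and $\B_0$. In light-cone coordinates the boost acts by $u\mapsto e^s u$, $v\mapsto e^{-s}v$, so $\Lambda(s)\O=(e^s I_1)\times(e^{-s}I_2)$; using $(e^{-s}I_2)^{-1}=e^s I_2^{-1}$ one gets $U(s)\A(\O)U(s)^*=\B_0(e^s I_1)\vee\A_0(e^s R^2 I_2^{-1})=\A(\Lambda(s)\O)$, so the single group $U$ implements all boosts. For the invariant vector I would take the common $U$-invariant vector $\xi$ of the chiral nets: every $\A_0(K)$ with $K\in\I_+$ sits inside some $\A(\O)$ (choose $I_1$ with $\min I_1>\max K$ and $I_2=R^2 K^{-1}$), so $\A_0(\RR^+)\subset\A(\LH)$ and $\xi$, being cyclic for $\A_0(\RR^+)$, is cyclic for $\A(\LH)$.

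The heart of the proof is locality. First I would translate spacelike separation into an interval statement: in two dimensions $(\Delta x)^2-(\Delta t)^2=\Delta u\,\Delta v$, so two points are spacelike separated iff $\Delta u$ and $\Delta v$ have the same sign, whence $\O=I_1\times I_2$ and $\tilde\O=\tilde I_1\times\tilde I_2$ are spacelike separated iff the $u$-intervals and the $v$-intervals are ordered the same way; after relabelling I may assume $I_1>\tilde I_1$ and $I_2>\tilde I_2$. Commutation of $\A(\O)$ and $\A(\tilde\O)$ then reduces to four pairings. Two are immediate from ordinary locality: $\B_0(I_1)$ commutes with $\B_0(\tilde I_1)$ because $I_1,\tilde I_1$ are disjoint and ordered, and $\A_0(R^2 I_2^{-1})$ commutes with $\A_0(R^2\tilde I_2^{-1})$ because $I_2>\tilde I_2$ forces $R^2 I_2^{-1}<R^2\tilde I_2^{-1}$.

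The two mixed pairings are where forward locality and the geometry of $\LH$ enter, and this is the step I expect to be the main obstacle, precisely because forward locality is one-directional. The containment $\overline{\tilde\O}\subset\LH$ says $\tilde u\tilde v>R^2$ at the lower corner, i.e.\ $\tilde I_1>R^2\tilde I_2^{-1}$; combined with $I_1>\tilde I_1$ this gives $I_1>R^2\tilde I_2^{-1}$, so forward locality of $\B_0$ w.r.t.\ $\A_0$ makes $\A_0(R^2\tilde I_2^{-1})$ commute with $\B_0(I_1)$. For the remaining pair I would chain $R^2 I_2^{-1}<R^2\tilde I_2^{-1}<\tilde I_1$ to get $\tilde I_1>R^2 I_2^{-1}$, and forward locality again applies with the $\B_0$-interval to the future. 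The subtle point to verify is that, thanks to the constraint $uv>R^2$, in both mixed pairings it is always the $\B_0$-interval that lies in the future of the $\A_0$-interval, so the single hypothesis that $\B_0$ is forwardly local w.r.t.\ $\A_0$ is exactly what is needed and the reverse relation never arises.
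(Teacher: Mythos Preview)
Your argument is correct and follows essentially the same route as the paper's proof. For locality you use the same chain of inequalities (in the paper's pointwise notation, $u'>u>R^2/v>R^2/v'$ after fixing the ordering) to reduce the four pairings to locality of $\A_0$, locality of $\B_0$, and two applications of forward locality with the $\B_0$-interval always to the future; your covariance computation is identical to the paper's. You are in fact more thorough: the paper's proof only verifies locality and boost covariance explicitly, while you also check isotony and the cyclic invariant vector, which the paper leaves implicit.
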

\proof
To check locality, we have to show that if $(u,v)$ and $(u',v')$ are points of $\LH$ that are spacelike, i.e., $(u' - u)(v' - v) >0$,  then $\A(u,v)$ and $\A(u',v')$ commute. So choose $(u,v)$ and $(u',v')$ with $uv> R^2$ and $u'>u$, $v'>v$. 

Since $u'>u> R^2/v > R^2/v'$, $\B_0(u')$ commutes with $\B_0(u)$ (by locality of $\B_0$) and with $\A_0(R^2/v)$ (by forward locality), so $\B_0(u')$ commutes with $\A(u,v)$. Moreover $\A_0(R^2/v')$ commutes with $\A_0(R^2/v)$ (by locality of $\A_0$) and with $\B_0(u)$ (by forward locality). Therefore $\B_0(u')\vee \A_0(R^2/v')$ commutes with $\B_0(u)\vee \A_0(R^2/v)$.

Concerning the covariance, let $U$ be the dilation one parameter unitary group of $\A_0$ and $\B_0$, thus $U(s)\B_0(u)U(-s) = \B_0(e^s u)$ and $U(s)\A_0(v)U(-s) = \A_0(e^s v)$. Then
\begin{multline*}
\quad U(s)\A(u,v)U(-s) = U(s)\big(\B_0(u)\vee \A_0(R^2/v)\big)U(-s) =
\\ = \B_0(e^s u)\vee \A_0(e^s R^2/v) = \A(e^s u, e^{-s}v) =
\A\big(\Lambda(s)(u,v)\big) 
\qquad \end{multline*}
as desired.
\endproof
\begin{proposition} 
$\A(\LH)$ is generated by $\B_0(\RR^+)$ and $\A_0(\RR^+)$. So
$\A(\LH)=\A_0(\RR^+)$ if $\B_0(\RR^+)\subset\A_0(\RR^+)$, in particular  if duality for half-lines holds for $\A_0$.
\end{proposition}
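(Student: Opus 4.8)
The plan is to show that $\A(\LH)$, which by definition equals $\bigvee_{\O\in\K}\A(\O)$, is generated jointly by $\B_0(\RR^+)$ and $\A_0(\RR^+)$, and then to read off the two stated consequences.

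First I would unwind the definition. Using formula \eqref{bqft}, each $\A(\O)$ with $\O = I_1\times I_2$ equals $\B_0(I_1)\vee \A_0(R^2 I_2^{-1})$, so
\[
\A(\LH) = \bigvee_{\O\in\K}\big(\B_0(I_1)\vee \A_0(R^2 I_2^{-1})\big)
= \Big(\bigvee_{I_1}\B_0(I_1)\Big)\vee\Big(\bigvee_{I_2}\A_0(R^2 I_2^{-1})\Big).
\]
The content of the claim is then that as $\O$ ranges over $\K$, the intervals $I_1$ exhaust (a cofinal family in) the bounded open intervals of $\RR^+$, and likewise $R^2 I_2^{-1}$ exhausts such a family. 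The only subtlety is the constraint $uv>R^2$ defining $\K$: one cannot take $I_1$ and $I_2$ arbitrarily large at the same time. But to generate $\B_0(\RR^+) = \bigvee_{I_1}\B_0(I_1)$ it suffices to let $I_1$ grow while keeping $I_2$ a small interval bounded away from $0$ (so that $uv>R^2$ holds on $\bar\O$), and symmetrically for $\A_0$. Hence every $\B_0(I_1)$ and every $\A_0(R^2 I_2^{-1})$ appears among the generators, giving $\A(\LH) = \B_0(\RR^+)\vee\A_0(\RR^+)$.

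For the second sentence I would argue as follows. If $\B_0(\RR^+)\subset\A_0(\RR^+)$, then $\B_0(\RR^+)\vee\A_0(\RR^+) = \A_0(\RR^+)$, so $\A(\LH)=\A_0(\RR^+)$. Finally, to reduce the hypothesis to duality for half-lines, I would invoke the Lemma already proved in the excerpt: since $\B_0$ is forwardly local w.r.t.\ $\A_0$ and duality for half-lines holds, that Lemma (in its $\RR^+$ form, via the correspondence \eqref{diltrans}) gives $\B_0(a,\infty)\subset\A_0(a,\infty)$ for all $a>0$; letting $a\to 0$ yields $\B_0(\RR^+)\subset\A_0(\RR^+)$, which is exactly the hypothesis of the preceding sentence.

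The main obstacle — really the only point requiring care — is the first step: verifying that the cutoff $uv>R^2$ in the definition of $\K$ does not obstruct generation, i.e.\ that one can approximate an arbitrarily large interval on one light-cone axis while shrinking the interval on the other so as to stay inside $\LH$. Once the cofinality of the admissible $I_1$ (resp.\ $R^2 I_2^{-1}$) is established, everything else is a routine manipulation of the von~Neumann lattice operation $\vee$.
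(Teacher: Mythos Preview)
Your argument is correct and is exactly the natural one; the paper in fact states this Proposition without proof, so you are supplying the details the authors left to the reader. The one point worth tightening is the phrase ``letting $a\to 0$'': more precisely, for every $I\in\I_+$ one has $\B_0(I)\subset\B_0(\inf I,\infty)\subset\A_0(\inf I,\infty)\subset\A_0(\RR^+)$, and taking the join over $I$ gives $\B_0(\RR^+)\subset\A_0(\RR^+)$.
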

\begin{corollary}\label{vKMS}
If $\B_0(\RR^+)\subset\A_0(\RR^+)$, $\xi$ is a vacuum (resp.\ KMS) vector for $\A$ (w.r.t.\ the boost) if it is a vacuum (resp.\ KMS) vector for $\A_0$.
\end{corollary}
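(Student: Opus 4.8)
The plan is to reduce the statement entirely to the identification of von~Neumann algebras furnished by the preceding Proposition. Under the hypothesis $\B_0(\RR^+)\subset\A_0(\RR^+)$, that Proposition gives $\A(\LH)=\A_0(\RR^+)$ as von~Neumann algebras on the common Hilbert space $\H$. First I would record that the boost automorphism group of $\A$, namely $\tau_s=\Ad U(s)$ with $U$ the dilation unitary group, acts on $\A(\LH)=\A_0(\RR^+)$ by exactly the dilation automorphisms of $\A_0$: indeed $U$ is by construction the common dilation group of $\A_0$ and $\B_0$, so the two one-parameter automorphism groups are literally the same. Thus the entire dynamical datum $(\A(\LH),\tau,\xi)$ coincides with the dilation dynamical datum of $\A_0$ on $\A_0(\RR^+)$.

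Next I would check that $\xi$ qualifies as an admissible invariant vector for $\A$ whenever it is a vacuum or KMS vector for $\A_0$. Since $\xi$ is $U$-invariant by hypothesis and is cyclic for $\bigvee_I\A_0(I)=\A_0(\RR^+)=\A(\LH)$, the cyclicity and invariance required in axiom~4 for the net on $\LH$ hold automatically; hence $\xi$ is an invariant vector for $\A$ in the sense of Section~\ref{basicdef}.

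With this identification in place, both assertions are immediate, because each of the two distinguished states is defined purely in terms of the triple (algebra, one-parameter automorphism group, invariant vector). For the vacuum/ground-state case, $\xi$ is a vacuum vector precisely when the generator of $U$ is positive; as $U$ is the same unitary group on both sides and cyclicity has been verified, this condition holds for $\A$ if and only if it holds for $\A_0$. For the KMS case, the statement that $\tau=\Ad U$ satisfies the KMS condition w.r.t.\ $(\xi,\cdot\,\xi)$ on $\A(\LH)$ is, after the identification $\A(\LH)=\A_0(\RR^+)$ and $\tau=\text{dilations}$, verbatim the statement that the dilation group satisfies the KMS condition w.r.t.\ $(\xi,\cdot\,\xi)$ on $\A_0(\RR^+)$.

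I do not expect any genuine obstacle here: the substance is carried by the Proposition, which collapses $\A(\LH)$ onto $\A_0(\RR^+)$ and makes the boost dynamics coincide with the dilation dynamics. The only point requiring a line of care is that ``vacuum (ground) state'' and ``KMS state'' are intrinsic properties of the dynamical system $(\A(\LH),\tau,\xi)$ and make no reference to the spacetime localization structure of the net; once this is noted the equivalence is a tautology. One may also remark that the relevant inverse temperature is the same on both sides, since the boost and the dilation are implemented by the identical one-parameter group~$U$.
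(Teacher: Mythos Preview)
Your argument is correct and is precisely the intended one: the paper gives no separate proof of this corollary because, as you explain, once the preceding Proposition yields $\A(\LH)=\A_0(\RR^+)$ and one observes that the boost unitary group for $\A$ is literally the dilation unitary group for $\A_0$, the vacuum and KMS properties transfer tautologically as properties of the common dynamical system $(\A_0(\RR^+),\Ad U,\xi)$.
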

\subsection{Constructing QFT on $\LH$ by an element of the semigroup $\Ed(\A)$}
\label{Vconstruction}
Let now $\A_0$ be a local, dilation covariant net of Neumann algebras on $\mathbb R^+$. We denote by 
$\Ed(\A_0)$ the group of unitaries on $\H$, commuting with the
dilation unitary group $U$, such that $V\A_0(1,\infty)V^*\subset \A_0(1,\infty)$. As $V$ commutes with dilations, $V\A_0(a,\infty)V^*\subset \A_0(a,\infty)$ for all $a>0$.
\begin{proposition}
Let $\A_0$ be in a KMS representation with an extremal KMS state. Then
for each $V\in\Ed(\A_0)$, one has  $V\A_0(\RR^+)V^*=\A_0(\RR^+)$.
\end{proposition}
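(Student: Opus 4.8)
Throughout write $M=\A_0(\RR^+)$, let $\omega=(\xi,\,\cdot\,\xi)$ be the KMS state, and let $\sigma_t=\Ad U(t)$ be the boost/dilation flow. Since the GNS vector of a KMS state is cyclic and separating, $\omega$ is faithful and normal on $M$, and as it satisfies the KMS condition with respect to $\sigma$ the flow $\sigma$ is precisely the modular automorphism group of $(M,\omega)$. From $V\in\Ed(\A_0)$ we already have $VMV^*\subseteq M$; since $V$ commutes with $U$ and $U(t)MU(-t)=M$, the subalgebra $N:=VMV^*$ is globally invariant under $\sigma$. The goal is to prove $N=M$, and the plan is to show that $\xi$ is cyclic for $N$: together with the fact that $\xi$ is separating for $N$ (it is separating already for the larger $M$), this forces equality. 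Concretely, by Takesaki's theorem there is an $\omega$-preserving conditional expectation $E\colon M\to N$, implemented by the projection onto $\overline{N\xi}$; once $\overline{N\xi}=\H$ this projection is $1$, so $E=\mathrm{id}$ and $N=M$.

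The engine is the following identity. The map $\alpha=\Ad V|_M$ is a normal unital $*$-endomorphism of $M$ into itself commuting with $\sigma$ (because $V$ commutes with $U$). A short analytic-continuation argument then shows that $\phi:=\omega\circ\alpha$ is again a normal $\sigma$-KMS state on $M$: for $x,y\in M$ one has $\phi(\sigma_t(x)y)=\omega(\sigma_t(\alpha(x))\alpha(y))$, whose KMS analytic continuation, available since $\omega$ is KMS and $\alpha(x),\alpha(y)\in M$, has the correct boundary value $\phi(y\sigma_t(x))$. Now I invoke extremality: an extremal KMS state has $Z(M^\omega)=\CC$, so $\omega$ is the \emph{only} normal $\sigma$-KMS state on $M$, whence $\phi=\omega$. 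Spelled out, $\omega(VxV^*)=\omega(x)$ for all $x\in M$, i.e.\ the vectors $V^*\xi$ and $\xi$ induce the same state on $M$; consequently there is an isometry $W\in M'$ with $W\xi=V^*\xi$, and $V^*\xi$ is cyclic for $M$ exactly when $W$ is unitary.

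It remains to prove $W$ unitary. Since $V^*$ commutes with $U$ and $\xi$ is $U$-invariant, $V^*\xi$ is $U$-invariant; applying $U(t)$ to $W\xi=V^*\xi$ and using that $\xi$ separates $M'$ gives $U(t)WU(-t)=W$, so $W$ lies in the centralizer $(M')^\omega$. On this centralizer $\omega$ restricts to a faithful normal tracial state, and in a von~Neumann algebra with a finite faithful normal trace every isometry is unitary ($\tau(WW^*)=\tau(W^*W)=\tau(1)$ forces $WW^*=1$). Thus $W$ is unitary, $V^*\xi$ is cyclic for $M$, and therefore $\overline{N\xi}=\overline{VMV^*\xi}=V\,\overline{M\,V^*\xi}=\H$, which closes the argument.

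I expect the main obstacle to be exactly the upgrade from the inclusion $VMV^*\subseteq M$ to equality: abstractly, an endomorphism of a factor commuting with the modular flow need not be surjective, so the proof must use the extremal-KMS hypothesis in an essential way. It enters twice and decisively — first to force $\phi=\omega$ (uniqueness of the normal KMS state), which produces the implementing isometry $W\in M'$, and then through the finiteness of the centralizer $(M')^\omega$, where the restricted state is a trace so that the isometry is automatically a unitary. Verifying that $\omega\circ\alpha$ is genuinely KMS, and that $W$ falls into the centralizer, are the two technical points to be checked with care.
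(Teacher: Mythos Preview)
Your argument is correct and reaches the same endpoint as the paper---cyclicity of $\xi$ for $N=VMV^*$, then Takesaki's theorem---but by a genuinely different route. The paper's proof is much shorter: from extremality it deduces that $M=\A_0(\RR^+)$ is a factor and then asserts that $\xi$ is, up to a phase, the \emph{unique} dilation-invariant vector; since $V$ commutes with dilations, $V^*\xi$ is invariant, hence a scalar multiple of $\xi$, and therefore automatically cyclic for $M$. (The uniqueness of the invariant vector is stated without proof in the paper; it does not follow from factoriality alone, but is a standard consequence of factoriality together with the asymptotic abelianness furnished by locality of the net.) You bypass this step entirely: rather than showing $V^*\xi\in\CC\xi$, you only show that $V^*\xi$ induces the same state as $\xi$ on $M$ (via uniqueness of the normal $\sigma$-KMS state on a factor), and then upgrade the implementing isometry $W\in M'$ to a unitary using the finite faithful trace on the centralizer $(M')^{\omega'}$. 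This is more work, but it uses only the factoriality of $M$ and not the ergodicity of the modular action, so it is in a sense the more robust argument.

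One small correction: the standard characterization of an extremal KMS state is $Z(M)=\CC$ (i.e.\ $M$ a factor), not $Z(M^\omega)=\CC$; the latter is in general a stronger condition. However, the weaker hypothesis $Z(M)=\CC$ already yields the uniqueness of the normal $\sigma$-KMS state on $M$, which is all your argument actually uses, so the proof goes through unchanged with this adjustment.
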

\proof 
If the representation arises from an extremal KMS state, $\A_0(\RR^+)$
is a factor. In particular, the GNS vector $\xi$ is (up to phases) the
unique dilation invariant vector. Because $V$ commutes with the dilations, 
$V^*\xi$ is equal to $\xi$ up to a phase, therefore cyclic for $\A_0(\RR^+)$. Thus, $V\A_0(\RR^+)V^*$ is contained in $\A_0(\RR^+)$, cyclic on the invariant vector $\xi$, and
globally invariant under the modular group of $\A_0(\RR^+)$ (the
rescaled dilations). So $V\A_0(\RR^+)V^*=\A_0(\RR^+)$ by Takesaki's theorem.  
\endproof
If $\A_0$ is a local, M\"obius covariant net of Neumann algebras on $\mathbb R$, then the restriction $\A_0|_{\RR^+}$ of $\A_0$ to $\RR^+$ is a local, dilation covariant net on $\mathbb R^+$ (in a KMS representation) and we simply set $\Ed(\A_0)= \Ed(\A_0|_{\RR^+})$.

Setting
\begin{equation}
\B_0(I) = V\A_0(I)V^*,\quad I\in\I_+\ ,
\end{equation}
$\B_0$ is a local net of von~Neumann algebras on $\RR_+$, with dilation unitary group $U$. The net $\B_0$ is forwardly local w.r.t.\ $\A_0$: if $0<a<b<c<d$ we have $\B_0(c,d)\subset \B_0(c,\infty)\subset \A_0(c,\infty)$ so $\B_0(c,d)$ commutes with $\A_0(a,b)$.

With $V$ a unitary in $\Ed(\A_0)$ we denote by ${\A_0}_V$ the local net on $\LH$ associated with $\A_0$ and $\B_0= V\A_0 V^*$, i.e.,
\[
{\A_0}_V(\O) = V\A_0(I_1)V^*\vee \A_0(I_2)\ ,
\]
where $\O=I_1\times I_2\in\K$.
So we have:
\begin{proposition}\label{AV}
${\A_0}_V$ is a local, boost covariant net of von~Neumann algebras on $\LH$.
\end{proposition}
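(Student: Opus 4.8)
The plan is to recognize that ${\A_0}_V$ is, by its very definition, the net produced by the Theorem at the beginning of this section applied to the pair $(\A_0,\B_0)$ with $\B_0 = V\A_0 V^*$. Consequently the proposition reduces to checking that this particular $\B_0$ meets the three input hypotheses of that Theorem: it must be a local net of von~Neumann algebras on $\RR^+$, covariant under the same dilation group $U$, and forwardly local w.r.t.\ $\A_0$. Once these are verified, the locality and boost covariance of ${\A_0}_V$ on $\LH$ follow word for word from the proof of the Theorem, since no further property of $\B_0$ is used there beyond these three.

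First I would verify that $\B_0$ is a dilation-covariant local net, both steps being routine. Locality is automatic, because conjugation by the unitary $V$ preserves commutation: if $I_1,I_2$ are disjoint then $\A_0(I_1)$ and $\A_0(I_2)$ commute, hence so do $\B_0(I_1)=V\A_0(I_1)V^*$ and $\B_0(I_2)=V\A_0(I_2)V^*$. For covariance I would use that $V\in\Ed(\A_0)$ commutes with $U(s)$ for all $s$: conjugating the relation $U(s)\A_0(I)U(-s)=\A_0(e^s I)$ by $V$ and commuting $V$ past $U(s)$ gives $U(s)\B_0(I)U(-s)=\B_0(e^s I)$, so $\B_0$ is covariant under the very same one-parameter group as $\A_0$.

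The only nontrivial point, and the single place where the defining property of the semigroup $\Ed(\A_0)$ is invoked, is forward locality of $\B_0$ w.r.t.\ $\A_0$; I would argue this exactly as in the paragraph preceding the proposition. For $0<a<b<c<d$ one has $\B_0(c,d)\subset\B_0(c,\infty)=V\A_0(c,\infty)V^*\subset\A_0(c,\infty)$, the last inclusion being precisely the condition $V\in\Ed(\A_0)$; since $\A_0(c,\infty)$ commutes with $\A_0(a,b)$ by locality of $\A_0$, so does $\B_0(c,d)$, which is forward locality. An appeal to the Theorem then completes the argument, and (when additionally $\B_0(\RR^+)\subset\A_0(\RR^+)$) the cyclic invariant vector needed for the full net structure is supplied by Corollary~\ref{vKMS}. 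I do not expect a genuine obstacle here: the entire substance of the construction has already been isolated in the semigroup condition $V\A_0(a,\infty)V^*\subset\A_0(a,\infty)$, and the proposition is essentially its packaging through the Theorem.
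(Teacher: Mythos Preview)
Your proposal is correct and follows exactly the paper's approach: the paper establishes in the paragraphs immediately preceding the proposition that $\B_0=V\A_0V^*$ is a local, dilation covariant net forwardly local w.r.t.\ $\A_0$ (with the same forward-locality argument you give), and the proposition is then recorded as an immediate consequence of the Theorem at the start of the section. Your write-up merely spells out in a bit more detail the locality and covariance of $\B_0$, which the paper states without elaboration.
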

\noindent
By Cor.\ \ref{vKMS}, ${\A_0}_V$ is in a vacuum (resp.\ KMS) representation if $\A_0$ is in a vacuum (resp.\ KMS) representation.

So, in particular, given a local M\"obius covariant net $\A$ on $\mathbb
R$, we have maps:
\begin{equation}\label{map0}
V\in\E(\A_0)\  \mapsto\  \text{BQFT net  ${\A_0}_V$ on $\LH$ in a ground representation}
\end{equation}
and
\begin{equation}\label{map}
V\in\Ed(\A_0)\  \mapsto\  \text{BQFT net  ${\A_0}_V$ on $\LH$ in a KMS representation}.
\end{equation}
where the inverse temperature is $\beta = 2\pi$. Here, as in \cite{LW}, $\E(\A_0)$ is the semigroup of unitaries commuting with translations such that $V\A_0(\RR^+)V^*\subset V\A_0(\RR^+)V^*$.

We shall say that two nets $\B_1$, $\B_2$ on $\LH$, acting on the Hilbert spaces $\H_1$ and $\H_2$, are \emph{locally isomorphic} if for every double cone $\O\in\K$  there is an isomorphism $\Phi_\O:\B_1(\O)\to\B_2(\O)$ such that
\[
\Phi_{\tilde\O} |_{\B_1(\O)} = \Phi_\O 
\]
if $\O,\tilde\O\in\K$, $\O\subset \tilde\O$ and 
\[
 U_2(t)\Phi_\O(X) U_2(-t) =\Phi_{\O+t}(U_1(t)XU_1(-t)),\quad X\in \B_1(\O)\ ,
\]
with $U_1$ and $U_2$ the corresponding boost unitary groups on $\H_1$ and $\H_2$.
\begin{proposition}
\label{lociso}
Let $\A_0$ be a local M\"obius covariant net of Neumann algebras on $\mathbb R$ with the split property. If $V$ and $W$ are unitaries in $\Ed(\A)$ the nets ${\A_0}_{V}$ and ${\A_0}_{W}$ on $\LH$ are locally isomorphic.
\end{proposition}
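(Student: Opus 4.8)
The plan is to produce, for each double cone $\O = I_1\times I_2\in\K$, an explicit isomorphism $\Phi_\O\colon {\A_0}_V(\O)\to{\A_0}_W(\O)$ and then check that the family $\{\Phi_\O\}$ is coherent and intertwines the boosts. First I would record what $\O\in\K$ means for the two intervals: strict containment $\bar\O\subset\LH$ forces $\sup\bar I_2<\inf\bar I_1$, so $I_1$ and $I_2$ are bounded intervals of $\RR^+$ with disjoint closures and a genuine gap between them. Since ${\A_0}_V(\O)=V\A_0(I_1)V^*\vee\A_0(I_2)$ and ${\A_0}_W(\O)=W\A_0(I_1)W^*\vee\A_0(I_2)$ differ only in the first factor, and $\Ad(WV^*)$ carries $V\A_0(I_1)V^*$ onto $W\A_0(I_1)W^*$ (indeed $(WV^*)\,V\A_0(I_1)V^*\,(WV^*)^*=W\A_0(I_1)W^*$), the natural candidate is ``$\Ad(WV^*)$ on the first factor, identity on the second''. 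The whole difficulty is to make this factor-wise prescription well defined, which is exactly where the split property enters.

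The crux is the following. The unitaries in $\Ed(\A_0)$ only satisfy $V\A_0(a,\infty)V^*\subset\A_0(a,\infty)$, so $V\A_0(I_1)V^*$ is localized merely in the half-line algebra $\A_0(\inf I_1,\infty)$, not in a bounded interval; the same holds for $W$. The remedy is to split off the \emph{other}, genuinely bounded factor. Using the gap, choose a bounded open interval $J_0$ with $\bar I_2\Subset J_0$ and $\bar J_0\cap[\inf I_1,\infty)=\emptyset$. The split property of $\A_0$ applied to $\A_0(I_2)\subset\A_0(J_0)$ gives an intermediate type~$\mathrm{I}$ factor $\N$ with $\A_0(I_2)\subset\N\subset\A_0(J_0)$. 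By locality $\A_0(J_0)$ commutes with $\A_0(\inf I_1,\infty)$, hence $\N'\supset\A_0(\inf I_1,\infty)$, and therefore both $V\A_0(I_1)V^*$ and $W\A_0(I_1)W^*$ lie in $\N'$. Since $\N$ is a type~$\mathrm{I}$ factor, the pair $(\A_0(I_2),V\A_0(I_1)V^*)$ sits inside $(\N,\N')$, so $ab\mapsto a\otimes b$ (von Neumann tensor product) is a normal isomorphism
\[
{\A_0}_V(\O)=\A_0(I_2)\vee V\A_0(I_1)V^*\;\cong\;\A_0(I_2)\otimes V\A_0(I_1)V^*,
\]
and the \emph{same} $\N$ gives the analogous factorization of ${\A_0}_W(\O)$.

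I would then define $\Phi_\O$ on generators by
\[
\Phi_\O(a\,b)=a\cdot WV^*\,b\,VW^*,\qquad a\in\A_0(I_2),\ b\in V\A_0(I_1)V^*.
\]
Since $a\in\N$ and $WV^*bVW^*\in W\A_0(I_1)W^*\subset\N'$ commute, this is the composite of the first factorization, $\id\otimes\Ad(WV^*)$, and the inverse of the second; it is therefore a normal $*$-isomorphism of ${\A_0}_V(\O)$ onto $\A_0(I_2)\vee W\A_0(I_1)W^*={\A_0}_W(\O)$. The formula is manifestly independent of the auxiliary choices $J_0,\N$, which makes coherence immediate: if $\O\subset\tilde\O$ then $\A_0(I_2)\subset\A_0(\tilde I_2)$ and $V\A_0(I_1)V^*\subset V\A_0(\tilde I_1)V^*$, and $\Phi_{\tilde\O}$ acts on these elements by the same formula, so $\Phi_{\tilde\O}|_{{\A_0}_V(\O)}=\Phi_\O$.

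Finally, for boost covariance, both nets act on the same $\H$ with the same boost group $U$ (the common dilation group of $\A_0$ and $\B_0=V\A_0V^*$), and $U(s)$ commutes with $V$ and $W$ because they lie in $\Ed(\A_0)$. Hence $U(s)$ sends $a\mapsto U(s)aU(-s)\in\A_0(e^sI_2)$ and $b\mapsto U(s)bU(-s)\in V\A_0(e^sI_1)V^*$, and a one-line check using $[U(s),V]=[U(s),W]=0$ gives $U(s)\Phi_\O(ab)U(-s)=\Phi_{\Lambda(s)\O}\big(U(s)abU(-s)\big)$, which is the required intertwining. I expect the only real obstacle to be the one handled in the second step: the image $V\A_0(I_1)V^*$ is not interval-localized, so the tensor splitting must be organized around the bounded factor $\A_0(I_2)$ rather than around $V\A_0(I_1)V^*$; once the single type~$\mathrm{I}$ factor $\N$ (valid simultaneously for $V$ and $W$) is in place, everything else is formal.
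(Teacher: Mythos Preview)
Your argument is correct and is precisely the approach the paper has in mind: it gives no details itself but refers to the analogous result in \cite{LW}, whose proof is exactly the split-property tensor factorization you wrote out. In particular, your observation that the type~I factor must be chosen around the genuinely interval-localized piece $\A_0(I_2)$ (since $V\A_0(I_1)V^*$ is only half-line localized in $\A_0(\inf I_1,\infty)$) is the key point there as well.
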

\begin{proof} 
The proof is similar to the one for the case of Boundary QFT on the
half-space \cite{LW}.
\end{proof}
\subsection{Induced nets on $\LH$}
\label{Induced}
The logarithmic map $t+x:=\log u$, $t-x:=\log R^2/v$ is a diffeomorphism
of the hyperboloid $\LH$ to the Minkowski half-space
$M_+=\{(t,x)\in\RR^2:x>0\}$. We write $I\times J:=\{(t,x):t+x\in
I,t-x\in J\}$. Then $I\times J\subset M_+$ iff $I>J$. The
diffeomorphism identifies double cones $I\times J\subset M_+$ with
double cones $e^I\times R^2e^{-J}\subset \LH$.

Let $\A(I)=\A_0(e^I)$ and $\B(I)=\B_0(e^I)$ be the translation
covariant nets on $\RR$ associated with $\A_0|_{\RR^+}$ and
$\B_0|_{\RR^+}$ by the logarithmic map. We use the diffeomorphism to
transfer the above net \eqref{bqft} on $\LH$ to a net on $M_+$, namely
$\A_+(I\times J):= \A_{\LH}(e^I\times R^2e^{-J}) = \B_0(e^I)\vee
\A_0(e^J)$ for $I>J$. Thus,
\begin{equation}\label{chiralbcft}
\A_+(I\times J) = \B(I)\vee \A(J).
\end{equation}
This is a local net on the Minkowski half-space
$M_+$. If $\A$ and $\B$ are the same (M\"obius covariant)
local nets, \eqref{chiralbcft} is the BCFT net of chiral observables,
constructed in \cite{LR1}. 

The discussion in the beginning of this
section shows that \eqref{chiralbcft} is local if $\A$ and $\B$ not
necessarily coincide, but $\B$ is forwardly local w.r.t.\ $\A$. 
Indeed, if $I_1\times J_1\subset M_+$ and $I_2\times J_2\subset M_+$
are spacelike to each other, then without loss of generality
$I_2>I_1>J_1>J_2$, so that forwardly locality is necessary and sufficient to
establish locality of the net $\A_+$. 

In \cite{LW}, local nets $\B$ that are forwardly local w.r.t.\ a given
local net $\A$ are constructed by conjugation with a unitary from the
semigroup $\E$. Thus, the present construction comprises the two
previous cases as special cases (after mapping the hyperboloid onto
the half-space); but the emphasis in this article is on the more 
flexible choice of states and representations. In particular, we here
admit $\A_0$ in the vacuum representation, which means that
$\A=\A_0|_{\RR_+}$ is in a KMS representation (see above).

In this subsection, we want to place the present construction into
context with the more general construction of non-chiral BCFT
observables on the half-space, presented in \cite{LR1}. Namely, in
\cite{LR1}, we have considered a nonlocal but relatively local chiral
extension of $\A$, which we call $\C$ here to prevent
confusion with $\B$ above; i.e., $\A(I)\subset \C(I)$ for all $I\in\I$. We then define the {\em induced net} on $M_+$
\begin{equation}\label{bcft}
\C_+(I\times J):= \C(K)'\cap \C(L),
\end{equation}
where $K$ is the open interval between $I$ and $J$, and $L$ is the
open interval whose closure equals the closure of $I\cup K\cup
J$ (i.e., $L=I\cup K\cup J$ plus two interior points). $\C_+$ is a
local net on $M_+$, and it contains the subnet $\A_+(I\times
J)=\A(I)\vee \A(J)$ of chiral observables. 

Also here, the conditions in \cite{LR1} can actually be relaxed:
Let $\A$, $\B$ be two local nets of von~Neumann algebras on $\RR$,
and $\C$ a possibly nonlocal net that extends both $\A$ and $\B$. 
\begin{proposition}\label{inducednet}
If $\B$ is forwardly local w.r.t.\ $\A$, and $\C$ is forwardly
local w.r.t.\ $\A$ and backwardly local w.r.t.\ $\B$, then the induced net
$\C_+$ on the half-space $M_+$ defined by \eqref{bcft} is local
and contains $\A_+(I\times J) = \B(I)\vee \A(J)$. Moreover, $\C_+$
is covariant under time translations if $\A$, $\B$ and $\C$ are
translation covariant and the translations of $\C$ restrict to the
translations of $\A$ and $\B$. 
\end{proposition}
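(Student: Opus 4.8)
The plan is to prove the three assertions — locality of $\C_+$, the inclusion $\A_+(I\times J)\subset\C_+(I\times J)$, and time-translation covariance — separately, each by a direct geometric argument at the level of the underlying intervals. I expect no analytic input beyond isotony and the stated relative locality hypotheses; the work is essentially careful bookkeeping of interval orderings, and I anticipate that each of the three hypotheses enters exactly once.

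For locality, I would take two spacelike double cones $\O_1=I_1\times J_1$ and $\O_2=I_2\times J_2$ in $M_+$ and, exactly as in the locality discussion for the chiral net \eqref{chiralbcft}, reduce by the evident symmetry to the ordering $I_2>I_1>J_1>J_2$. Writing $K_a$ for the gap between $J_a$ and $I_a$ and $L_a$ for the span of $J_a\cup K_a\cup I_a$, the crucial observation is the geometric inclusion $L_1\subset K_2$: indeed $J_1>J_2$ forces $\inf J_1\ge\sup J_2$ and $I_2>I_1$ forces $\sup I_1\le\inf I_2$, so $L_1=(\inf J_1,\sup I_1)$ sits inside $K_2=(\sup J_2,\inf I_2)$. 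Isotony alone then gives
\[
\C_+(\O_1)\subset\C(L_1)\subset\C(K_2),\qquad \C_+(\O_2)=\C(K_2)'\cap\C(L_2)\subset\C(K_2)',
\]
whence $\C_+(\O_1)$ and $\C_+(\O_2)$ commute. I would emphasize that this step uses only isotony of $\C$, not the relative locality assumptions.

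For the inclusion $\A_+(I\times J)=\B(I)\vee\A(J)\subset\C_+(I\times J)=\C(K)'\cap\C(L)$, I would verify the two defining conditions. Since $\C$ extends both $\A$ and $\B$ and $I,J\subset L$, isotony gives $\B(I)\vee\A(J)\subset\C(L)$. For membership in $\C(K)'$, note that with $I>J$ the gap $K$ lies in the future of $J$ and in the past of $I$; hence $\A(J)$ commutes with $\C(K)$ because $\C$ is forwardly local w.r.t.\ $\A$, and $\B(I)$ commutes with $\C(K)$ because $\C$ is backwardly local w.r.t.\ $\B$. Thus $\B(I)\vee\A(J)\subset\C(K)'$, completing the inclusion. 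The remaining hypothesis, forward locality of $\B$ w.r.t.\ $\A$, is precisely what makes the chiral net $\A_+$ itself local (as in \eqref{chiralbcft}), so that ``contains $\A_+$'' is a statement about a genuine local subnet.

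Finally, for covariance I would use that a time translation $t\mapsto t+s$ shifts both light-cone coordinates, so $I\times J\mapsto(I+s)\times(J+s)$ and, since $K$ and $L$ are defined purely geometrically from $I$ and $J$, they transform as $K\mapsto K+s$, $L\mapsto L+s$. Because conjugation by the common translation unitary $T(s)$ commutes with intersections and with passage to commutants, translation covariance of $\C$ yields
\[
T(s)\,\C_+(I\times J)\,T(-s)=\C(K+s)'\cap\C(L+s)=\C_+\big((I+s)\times(J+s)\big).
\]
The entire argument reduces to manipulating interval orderings, so I expect the only point genuinely requiring care — and the one place a wrong convention would break the proof — to be the identification of the spacelike condition with the ordering $I_2>I_1>J_1>J_2$ and the resulting key inclusion $L_1\subset K_2$.
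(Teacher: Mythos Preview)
Your proof is correct. The paper does not supply an explicit proof of this proposition, treating it as evident from the surrounding discussion; your argument fills in precisely the details the paper leaves to the reader, and your identification of which hypothesis is used where is accurate (in particular, your observation that locality of $\C_+$ follows from isotony of $\C$ alone via $L_1\subset K_2$, while the relative locality hypotheses on $\C$ enter only in the inclusion $\A_+\subset\C_+$, is exactly right).
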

\noindent
There is also a partial converse: given a local net $\D_+$ on $M_+$,
one may define associated nets on the boundary $\RR$ as follows: 
\bea
\A(J) &:=& \bigcap_{I:\, I>J} \D_+(I\times J) \notag \\
\B(I) &:=& \bigcap_{J:\, J<I} \D_+(I\times J) \notag \\ 
\label{generated}
\C(L) &:=& \bigvee_{I,J\subset L:\, I>J} \D_+(I\times J).
\ea
\noindent
The following result is also straightforward.
\begin{proposition}\label{boundarynets}
$\C$ contains both $\A$ and $\B$, and $\C$ is forwardly local w.r.t.\
$\A$ and backwardly local w.r.t.\ $\B$. In particular, $\B$ is
forwardly localy w.r.t.\ $\A$, and both $\A$ and
$\B$ are local. If in addition 
$\D_+$ is time translation covariant, then $\A$, $\B$ and $\C$
are translation covariant and the translations of $\C$ restrict to the
translations of $\A$ and $\B$. Moreover, one has the inclusions 
$$\B(I)\vee A(J) \subset \D_+(I\times J) \subset \C(K)'\cap \C(L).$$
\end{proposition}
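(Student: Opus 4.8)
The plan is to reduce every assertion to a single geometric dictionary together with isotony and locality of $\D_+$. Writing $a=t+x$ and $b=t-x$, a double cone $I\times J\subset M_+$ is the region $\{a\in I,\ b\in J\}$ and requires $I>J$; two double cones $I_1\times J_1$ and $I_2\times J_2$ are spacelike separated precisely when their factors are oppositely ordered, i.e.\ $I_1>I_2,\ J_1<J_2$ or $I_1<I_2,\ J_1>J_2$ (this is the $\geq$-direction of the paper's own ``$I_2>I_1>J_1>J_2$'' remark). Since $\A(J)$ and $\B(I)$ are defined as \emph{intersections} and $\C(L)$ as a \emph{join}, the uniform device is: to show that an element of an intersection-algebra commutes with some $\D_+$-algebra it suffices to exhibit one admissible spacelike pair and invoke locality of $\D_+$; and to commute with $\C(L)$ it suffices to commute with each generator $\D_+(I'\times J')$, $I',J'\subset L$, $I'>J'$, since the commutant is a von~Neumann algebra.

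First I would dispatch locality of $\A$ and $\B$ and the three relative-locality statements by the same ``free interval'' trick. For disjoint $J_1<J_2$ and $x\in\A(J_1)$, $y\in\A(J_2)$, choose $I_1>J_1$ and $I_2>J_2$ with $I_1>I_2$ (take $I_2$ just right of $J_2$, then $I_1$ further right); then $I_1\times J_1$ and $I_2\times J_2$ have oppositely ordered factors, hence are spacelike, while $x\in\D_+(I_1\times J_1)$ and $y\in\D_+(I_2\times J_2)$, so $[x,y]=0$. Forward locality of $\B$ w.r.t.\ $\A$ (case $I>J$), forward locality of $\C$ w.r.t.\ $\A$ (case $L>J$), and backward locality of $\C$ w.r.t.\ $\B$ (case $L<I$) run identically: the interval left free by the intersection defining $\A(J)$ or $\B(I)$ is pushed past the fixed data so that the two double cones become oppositely ordered, and for the $\C$-statements this is done against an arbitrary generator of $\C(L)$.

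Translation covariance follows because time translation by $s$ sends $a\mapsto a+s$, $b\mapsto b+s$, hence $\D_+(I\times J)\mapsto\D_+((I+s)\times(J+s))$ under the same unitary $T(s)$; reindexing the defining intersections and join gives $T(s)\A(J)T(-s)=\A(J+s)$, $T(s)\B(I)T(-s)=\B(I+s)$, $T(s)\C(L)T(-s)=\C(L+s)$, which is exactly the assertion that the translations of $\C$ restrict to those of $\A$ and $\B$. For the containment and the two displayed inclusions I would isolate single terms: from the defining intersections, $\A(J)\subset\D_+(I\times J)$ for every $I>J$ and $\B(I)\subset\D_+(I\times J)$ for every $J<I$, so $\B(I)\vee\A(J)\subset\D_+(I\times J)$; combined with $\D_+(I\times J)\subset\C(L)$ (valid since $I,J\subset L$, $I>J$ makes it a generator of $\C(L)$) this also yields $\A(J),\B(I)\subset\C(L)$, i.e.\ $\C$ contains $\A$ and $\B$. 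For the right inclusion, $\D_+(I\times J)\subset\C(K)'$ because $J<K<I$ forces every generator $\D_+(I'\times J')$ of $\C(K)$ to have $I'\subset K<I$ and $J'\subset K>J$, so $I'<I$ and $J'>J$: oppositely ordered, spacelike, hence commuting with $\D_+(I\times J)$ by locality of $\D_+$.

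I expect the only real care to be bookkeeping: keeping the spacelike$\leftrightarrow$opposite-ordering dictionary straight and checking at each step that the auxiliary interval lies in the required range (e.g.\ for forward locality of $\C$ w.r.t.\ $\A$ one may take the free $I''$ to the right of all of $L$ while still $I''>J$, which uses $L>J$). The sense of ``$\C$ contains $\A$ and $\B$'' is the net-theoretic one, $\A(J),\B(I)\subset\C(L)$ for $J,I$ with room inside $L$, as encoded by the left displayed inclusion; the literal $\A(J)\subset\C(J)$ need not hold, since $\A(J)$ is localized to the right of $J$ in the $a$-coordinate whereas $\C(J)$ is built from $a\in I'\subset J$. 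No input beyond isotony and locality of $\D_+$ enters, consistent with the claim that the result is straightforward.
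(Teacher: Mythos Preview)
Your proof is correct and rests on the same geometric mechanism as the paper's: use the freedom in the intersection defining $\A(J)$ (resp.\ $\B(I)$) to pick an auxiliary interval so that the resulting double cone is spacelike to the data at hand, and check commutation with $\C(L)$ generator by generator. The organization differs slightly: the paper proves directly only that $\C$ is forwardly local w.r.t.\ $\A$ and backwardly local w.r.t.\ $\B$, and then deduces locality of $\A$, $\B$ and forward locality of $\B$ w.r.t.\ $\A$ from the containments $\A(I),\B(I)\subset\C(I)$; you instead run the free-interval argument separately in each case. Your route is a bit more uniform and, as you note, avoids relying on the literal inclusion $\A(J)\subset\C(J)$, which is not obvious from the bare definitions without some inner regularity of $\A$ (the generators of $\C(J)$ live in double cones with both factors inside $J$, whereas $\A(J)$ is an intersection over double cones with first factor to the right of $J$). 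The weaker containment you establish---$\A(J),\B(I)\subset\C(L)$ whenever $J$ (resp.\ $I$) has room inside $L$---is what is actually needed, and together with your direct locality arguments yields everything the proposition asserts.
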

\noindent
In particular, every local net on $M_+$ is intermediate between a net
of the form \eqref{chiralbcft} and an induced net of the form \eqref{bcft}. 
\begin{proof} 
Only the locality properties need a little argument. If $L > J$,
choose $I>L$. Then $I\times J$ is spacelike separated from $I_1\times
J_1$ for all $I_1,J_1\subset L$, $I_1>J_1$. Because $\D_+$ is local, it
follows that $\C(L)$ commutes with $\A(J)$, i.e., $\C$ is forwardly
local w.r.t.\ $\A$. Backward locality of $\C$ w.r.t.\ $\B$ is similar,
and the other statements follow because $\C(I)$ contains $\A(I)$ and
$\B(I)$. Notice that a net is local if it is forwardly or backwardly
local w.r.t.\ itself.
\end{proof}
It would be interesting to characterize possible nonlocal nets $\C$
for which the induced net is not trivial ($=\CC$). E.g., if $\D_+$ is
given as in \eqref{chiralbcft}, and $\C$ is generated from $\D_+$ as in
\eqref{generated}, then the net induced from $\C$ contains at least
$\B(I)\vee \A(J)$. Even if $\B=\A$, the extension $\A(I)\vee
\A(J)\subset \C_+(I\times J)$ is nontrivial in general, reflecting the
superselection sectors of $\A$ \cite{KLM,LR1}. Thus, for a given local
net $\A$ and an element $V$ of the associated semigroup $\E$, putting
$\B=V\A V^*$, it would be a highly interesting problem to understand
the structure of the resulting induced net in terms of $\A$ and $V$. 

The analogous question arises when $\A$ and $\B$ are diffent chiral
extensions of a given chiral net (e.g., that of the stress-energy
tensor), differing by a cohomological twist of the Q-system \cite{KL}. 
We hope to return to these issues in a future publication. 

Clearly, the entire discussion of this subsection can be transferred 
back to the hyperboloid by the change of coordinates as above. Thus, 
on the hyperboloid one has for $I,J\in \I_+$, $I>R^2J\inv$,
$$C_{\LH}(I\times J) = C_0(K)'\cap C_0(L),$$
where $C_0$ is a net on $\RR_+$, and $K$ is the open interval between 
$I$ and $R^2J\inv$, and $L=I\cup K\cup R^2J\inv$ plus two interior points. 
If $C_0$ is dilation covariant, then $C_{\LH}$ is Lorentz covariant.

\section{Endomorphisms of standard subspaces}
\subsection{Preliminary comments}
With $a\in (0,\infty]$ we  denote by $\mathbb S_{a}$ the strip of the complex plane $\{z\in\mathbb C: 0<\Im z < a \}$ (so $\mathbb S_{\infty}$ is the upper plane).
Let $f\in L^1(\RR)$, and $\f$ the Fourier transform of $f$. Then, if
$f$ is real and supp$(f)\subset \RR^+$, $\f$ is a symmetric function
($\f(-s) =\overline{\f(s)}$) and belongs to the Hardy space $\Hin$,
namely it is the boundary value of a bounded analytic function on the
upper half-plane ${\mathbb S}_{\infty}$. The space of such functions
is weakly dense in $\Hin$. 

Let $\H$ be a (complex) Hilbert space, $H_1$ a real Hilbert subspace of $\H$ and $K$ a selfadjoint operator on $\H$. Suppose that 
\[
e^{itK}H_1\subset H_1,\quad \forall t\geq 0 \ .
\]
For $f$ and its Fourier transform $\f$ as before, we define
\[
\f(K) = \int_{-\infty}^\infty f(t)e^{itK}dt\ .
\]
Then, if supp$(f)\subset \RR^+$, we have
$\f(K)H_1\subset H_1$. 

Suppose further that $K$ has Lebesgue spectrum. Then the map $\f\in L^\infty(\RR)\mapsto \f(K)\in B(\H)$ is weakly continuous so, taking limits, we have
\[
\f(K)H_1\subset H_1\ , \quad \forall \f\in\Hin ,\ \f\ \text{symmetric},
\]
and, in particular, every symmetric inner function $\f$ on $\Si$ gives a unitary $V=\f(K)$ such that $VH_1\subset H_1$. 

We shall see here below a situation where this occurs. In our case a converse will hold. Our standard subspace methods will be appropriate as they allow us to prove such a converse too; moreover the reducible case is treatable too. 

\subsection{Characterization of the semigroup}
Let $\H$ be a complex Hilbert space, $H\subset\H$ a 
{\em standard subspace}, i.e., a closed real subspace such that $H+iH$ is
dense in $\H$ and $H\cap iH=\{0\}$, and $\Delta_H$, $J_H$ the
modular operator and modular conjugation of $H$. 

Recall that, if $V\in B(\H)$ a bounded linear operator on $\H$, we have the following \cite{AZ,LN}. 
\begin{proposition}\label{AZL}
The following are equivalent:
\begin{itemize}
\item[$(i)$]
$VH\subset H$ 
\item[$(ii)$]
The map $s\in\RR \to V(s)\equiv\Delta^{-is}V\Delta^{is}$ extends to a bounded weakly continuous function on the closed strip $\overline{\mathbb S_{1/2}}$, analytic in $\mathbb S_{1/2}$, such that
$V(i/2) = JVJ$.
\end{itemize}
where $\Delta = \Delta_H, J =J_H$.
\end{proposition}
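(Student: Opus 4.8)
The plan is to reduce both implications to the single statement that $V$ commutes with the Tomita operator $S=S_H=J\Delta^{1/2}$, and then to encode that commutation analytically by a Phragm\'en--Lindel\"of (three-lines) argument. Write $\Delta=\Delta_H$, $J=J_H$, and recall the standard facts that $H=\{\xi\in D(\Delta^{1/2}):\Delta^{1/2}\xi=J\xi\}$ is the fixed-point set of $S$, that $H+iH$ is by definition a core for $S$, that $\Delta^{it}H=H$, and that $J\Delta J=\Delta^{-1}$. The one technical device I would set up first is the family of spectral cut-offs $E_n=\mathbf 1_{[1/n,n]}(\Delta)$: since the interval $[1/n,n]$ is inversion-symmetric and $J\Delta J=\Delta^{-1}$, one has $JE_n=E_nJ$, whence $\Delta^{1/2}E_n\xi=E_nJ\xi=JE_n\xi$ for $\xi\in H$, so that $E_n\xi\in H$. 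Thus every $\xi\in H$ is approximated within $H$ by the entire vectors $E_n\xi$ (on $\operatorname{ran}E_n$ the group $\Delta^{it}$ extends to an entire $\H$-valued function), and this is what lets me pass freely between pointwise relations on the real subspace $H$ and weak operator identities. Using it I would show that $(i)$ implies $VS\subseteq SV$: for $\zeta\in H$ one has $S\zeta=\zeta$ and $V\zeta\in H\subset D(S)$ with $SV\zeta=V\zeta$, so $SVS=V$ on $H$; as $S$ is conjugate-linear while $V$ and $SVS$ are linear, this extends by linearity to the core $H+iH$ and then, by a routine closedness argument ($SV$ closed, $V$ bounded), to all of $D(S)$. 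Applying $J$ to $VS\xi=SV\xi$ yields the intertwining relation $\Delta^{1/2}V\xi=(JVJ)\Delta^{1/2}\xi$ for all $\xi\in D(\Delta^{1/2})$, in particular for entire $\xi$.

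For $(i)\Rightarrow(ii)$ I would fix entire $\xi,\eta\in\operatorname{ran}E_n$ and consider $F(s)=\langle\eta,\Delta^{-is}V\Delta^{is}\xi\rangle$, which is entire in $s$. On the line $\Im s=0$ one has $|F(s)|\le\|V\|\,\|\xi\|\,\|\eta\|$ by unitarity of $\Delta^{is}$. The role of the intertwining relation is to produce the matching bound on the upper edge: a short computation gives $F(t+i/2)=\langle\eta,\Delta^{-it}(JVJ)\Delta^{it}\xi\rangle$, again bounded by $\|V\|\,\|\xi\|\,\|\eta\|$ since $J$ is anti-unitary. The three-lines theorem propagates this bound to the whole closed strip $\overline{\mathbb S_{1/2}}$. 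Because the entire vectors are dense and the bound is uniform, the sesquilinear forms assemble into bounded operators $V(s)$ with $\|V(s)\|\le\|V\|$, depending weakly analytically on $s\in\mathbb S_{1/2}$ and weakly continuously on $\overline{\mathbb S_{1/2}}$, with $V(0)=V$ and, reading the upper edge at $t=0$, $V(i/2)=JVJ$. This is exactly $(ii)$.

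For the converse $(ii)\Rightarrow(i)$ I would run the argument backwards. For entire $\xi,\eta$ the function $F$ above is entire, hence coincides with the extension furnished by $(ii)$; evaluating at $s=i/2$ gives $\langle\Delta^{1/2}\eta,V\Delta^{-1/2}\xi\rangle=\langle\eta,JVJ\xi\rangle$, i.e. (substituting $\xi=\Delta^{1/2}\xi'$ and using that the right-hand side is a bounded functional of $\eta$) the identity $\Delta^{1/2}V\xi'=(JVJ)\Delta^{1/2}\xi'$ for all entire $\xi'$. Specializing to an entire $\zeta\in H$, where $\Delta^{1/2}\zeta=J\zeta$, this becomes $\Delta^{1/2}V\zeta=(JVJ)J\zeta=JV\zeta$, that is $S(V\zeta)=V\zeta$, so $V\zeta\in H$. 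The cut-off device then upgrades this to all of $H$: for arbitrary $\xi\in H$ the entire vectors $E_n\xi$ lie in $H$, so $VE_n\xi\in H$, and $VE_n\xi\to V\xi$ with $H$ closed forces $V\xi\in H$.

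I expect the genuine obstacle to be the analytic estimate rather than the algebra: the content of $(ii)$ beyond the formal relation $\Delta^{1/2}V\Delta^{-1/2}=JVJ$ is precisely the uniform boundedness and weak continuity on the closed strip, and this is what the three-lines argument supplies once the upper-edge bound is established. The supporting irritations---that $H+iH$ is a core for $S$, that a relation holding pointwise on the totally real subspace $H$ really does promote to an operator identity on a complex core, and that the weak limits define honest bounded operators---are all controlled by the inversion-symmetric cut-offs $E_n$, which is why I would introduce them at the very start.
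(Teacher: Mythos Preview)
Your argument is correct. Note, however, that the paper does not actually prove this proposition: it is quoted from \cite{AZ,LN} and used as background. Your reduction of $(i)$ to the inclusion $VS\subset SV$, followed by the intertwining relation $\Delta^{1/2}V\supset (JVJ)\Delta^{1/2}$ and a three-lines bound on entire vectors obtained from inversion-symmetric spectral cut-offs $E_n=\mathbf 1_{[1/n,n]}(\Delta)$, is essentially the standard proof in those references. One small point worth making explicit for completeness: in the three-lines step you should record the a~priori estimate $|F(s)|\le n\,\|V\|\,\|\xi\|\,\|\eta\|$ on the strip (coming from $\|\Delta^{\pm\sigma}\|\le n^{1/2}$ on $\mathrm{ran}\,E_n$ for $0\le\sigma\le 1/2$), so that the Phragm\'en--Lindel\"of hypothesis is visibly satisfied before you invoke the boundary bounds; everything else is in order.
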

\noindent
Let now $(H,T)$ be a {\em standard pair} of the Hilbert space $\H$. 
Namely $H$ is a standard subspace of the Hilbert space $\H$ and there exists a one parameter unitary group $T(t) = e^{itP}$ on $\H$ such that $T(t)H \subset H$ for all $t\geq 0$, and $P>0$.

Assume that $(H,T)$ is irreducible and
let $H_a\equiv T(a)H$, so $\Delta_{H_1} = T(1)\Delta T(-1)$. Then
\begin{equation}\label{hsm}
\Delta^{-is}H_1 \subset H_1,\quad s\geq 0\ ,
\end{equation}
indeed $\Delta^{-is}H_1 = \Delta^{-is}T(1)H = T(e^{2\pi s})\Delta^{-is}H =
T(e^{2\pi s})H = H_{e^{2\pi s}}$. 

If an inclusion of standard subspaces $H_1\subset H$ satisfies the condition \eqref{hsm} we shall say that $H_1\subset H$ is a \emph{half-sided modular (hsm)} inclusion (of standard subspaces).
\begin{proposition}\label{equiv1}
Let $H_1\subset H$ be an inclusion of standard subspaces of $\H$. The following are equivalent:
\begin{itemize}
\item[$(i)$] $H_1\subset H$ is a half-sided modular inclusion.
\item[$(ii)$] There exists a standard pair $(H,T)$ with $T(1)H = H_1$.
Moreover $T$ is uniquely determined.
\end{itemize}
\end{proposition}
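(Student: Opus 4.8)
The implication $(ii)\Rightarrow(i)$ is nothing but the computation displayed immediately before the statement: if $(H,T)$ is a standard pair and $H_1=T(1)H$, then $\Delta^{-is}H_1=T(e^{2\pi s})H=H_{e^{2\pi s}}\subset H_1$ for $s\ge 0$, which is the half-sided modular condition \eqref{hsm}. The substance of the proposition is therefore the converse $(i)\Rightarrow(ii)$ together with the uniqueness clause; this is the standard-subspace (one-particle) form of the Borchers--Wiesbrock theorem on half-sided modular inclusions. The plan is to build $T$ explicitly out of the two modular groups and to read off uniqueness from the very same formula. Throughout I write $\Delta=\Delta_H$, $J=J_H$, $\Delta_1=\Delta_{H_1}$.

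I would settle uniqueness first, since it also dictates the construction. Suppose $(H,T)$ is a standard pair with $T(1)H=H_1$. As $T(1)$ is a unitary carrying the standard subspace $H$ onto $H_1$, the modular operators transform covariantly, $\Delta_1=T(1)\Delta T(-1)$, so that $\Delta_1^{-is}=T(1)\Delta^{-is}T(-1)$. Using the commutation relation $\Delta^{-is}T(1)\Delta^{is}=T(e^{2\pi s})$ already invoked in the derivation of \eqref{hsm} (equivalently $\Delta^{is}T(1)\Delta^{-is}=T(e^{-2\pi s})$), one computes
\[
\Delta^{is}\Delta_1^{-is}=\bigl(\Delta^{is}T(1)\Delta^{-is}\bigr)T(-1)=T(e^{-2\pi s})T(-1)=T(e^{-2\pi s}-1).
\]
As $s$ ranges over $\RR$ the parameter $e^{-2\pi s}-1$ ranges over $(-1,\infty)$, so $T$ is completely determined on $(-1,\infty)$ by $\Delta$ and $\Delta_1$, which depend only on $H$ and $H_1$. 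Since a strongly continuous one-parameter group is fixed by its restriction to any neighbourhood of $0$, $T$ is unique.

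For existence the identity above tells us how to proceed: define, for $s\in\RR$,
\[
T\bigl(e^{-2\pi s}-1\bigr):=\Delta^{is}\Delta_1^{-is},
\]
which assigns a unitary to each parameter in $(-1,\infty)$. I would then establish, in order: (a) after this reparametrisation the family obeys the group law $T(a)T(b)=T(a+b)$, so that $T$ extends to a strongly continuous one-parameter unitary group on $\RR$ with self-adjoint generator $P$; (b) $P\ge 0$; and (c) $(H,T)$ is a standard pair with $T(1)H=H_1$. Step (a) is equivalent to the full $ax+b$ commutation relations $\Delta^{is}T(t)\Delta^{-is}=T(e^{-2\pi s}t)$ and $JT(t)J=T(-t)$ between the modular data of $H$ and the constructed translations. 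Granting (a) and (b), step (c) is then the routine part: the one-sided invariance $T(t)H\subset H$ ($t\ge 0$) is the integrated form of the hypothesis $\Delta^{-is}H_1\subset H_1$ (and, once one inclusion is known, $\Delta^{is}T(t)\Delta^{-is}=T(e^{-2\pi s}t)$ with $\Delta^{is}H=H$ propagates it to all $t\ge0$), while $T(1)H=H_1$ follows by checking that the Tomita operator $S=J\Delta^{1/2}$ of $T(1)H$ coincides with that of $H_1$ — the modular-operator part being exactly $\Delta_{T(1)H}=T(1)\Delta T(-1)=\Delta_1$, recovered from the commutation relations — so that the two standard subspaces, being the fixed-point spaces of equal Tomita operators, agree.

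The heart of the matter, and the main obstacle, is the analytic argument behind steps (a) and (b): deriving the \emph{two-sided} commutation relations and the positivity of $P$ from the \emph{one-sided} hypothesis $\Delta^{-is}H_1\subset H_1$ ($s\ge 0$). This is precisely the analytic core of the Borchers mechanism. Concretely I would fix vectors in appropriate cores and study the matrix elements $s\mapsto\langle\eta,\Delta^{is}\Delta_1^{-is}\xi\rangle$; the inclusion $H_1\subset H$ together with the one-sided invariance, fed through Proposition \ref{AZL} and the strip-analyticity it supplies (cf.\ \cite{AZ,LN}), yields boundedness and analyticity of these functions on a horizontal strip, and a three-lines / edge-of-the-wedge (Liouville-type) argument then forces the group law, the $ax+b$ relations, and simultaneously the positivity of the generator — the latter being the precise reflection of the one-sidedness of the modular condition. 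Everything else is the bookkeeping of step (c).
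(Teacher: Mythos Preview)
Your argument is correct. The paper itself dispatches the equivalence $(i)\Leftrightarrow(ii)$ by citing the standard-subspace versions of the Borchers and Wiesbrock theorems in \cite{LN}, so your outline of the Wiesbrock construction (defining $T$ through $\Delta^{is}\Delta_1^{-is}$ and then verifying the group law, positivity, and the standard-pair conditions) is strictly \emph{more} explicit than what the paper provides, and follows the same route as the reference it invokes.

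Where you genuinely diverge is in the uniqueness clause. You derive the closed formula $T(e^{-2\pi s}-1)=\Delta^{is}\Delta_1^{-is}$ from $\Delta_1=T(1)\Delta T(-1)$ together with the Borchers commutation relation, and conclude that $T$ is determined by the modular data of $H$ and $H_1$ alone. The paper argues differently: given two candidate groups $T_1,T_2$, it observes that $T_2(-1)T_1(1)H=H$, hence this unitary commutes with $\Delta$; but conjugating by $\Delta^{-is}$ sends it to $T_2(-e^{2\pi s})T_1(e^{2\pi s})$, so $T_2(-t)T_1(t)$ is constant in $t>0$, which forces $T_1=T_2$. Your approach has the advantage of yielding an explicit reconstruction of $T$ (the same formula you then reuse for existence), while the paper's argument is slightly lighter in that it never needs to identify $\Delta_1$ in terms of $T$ and $\Delta$ --- it only uses that a unitary preserving a standard subspace commutes with its modular group. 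Both are short and both rely on the Borchers commutation relations already in hand.
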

\proof
The equivalence between $(i)$ and $(ii)$ follows by the analogs of theorems by Borchers and Wiesbrock, see \cite{LN}, except for the uniqueness of $T$ that we show now.
Given the hsm inclusion $H_1\subset H$, let $T_1$ and $T_2$ be
one-parameter unitary groups on $\H$ such that $T_i(t)H\subset H$, $t\geq 0$ and $T_i(1)H = H_1$. Then $T_2(-1)T_1(1)H=H$, so $T_2(-1)T_1(1)$ commutes with $\Delta$. On the other hand $\Delta^{-is}T_2(-1)T_1(1)\Delta^{is} = T_2(-e^{2\pi s})T_1(e^{2\pi s})$, so 
$T_2(-t)T_1(t)= T_2(-1)T_1(1)$, $t\geq 0$; thus $T_2(1-t) = T_1(1-t)$ and $T_1 = T_2$ by the group property.
\endproof
\begin{proposition}
Let $H_1\subset H$ be an inclusion of standard subspaces of $\H$ and
$V$ be a unitary on $\H$ such that $VH_1\subset H_1$.
Then $V$ commutes with $\Delta$ iff $VH=H$\!\! .
\end{proposition}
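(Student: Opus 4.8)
The plan is to prove the two implications separately, with the ``only if'' direction being essentially automatic and the ``if'' direction (from $[V,\Delta]=0$ to surjectivity $VH=H$) carrying all the content. Throughout I write $S_H=J_H\Delta_H^{1/2}$ for the Tomita (anti-linear, involutive) operator of $H$, whose set of fixed points is exactly $H$, and I take $H_1\subset H$ to be the half-sided modular inclusion of the running setting, so that by Proposition~\ref{equiv1} there is a unique standard pair $(H,T)$ with $T(1)H=H_1$, and with $H_a:=T(a)H$ one has $\Delta^{-is}H_1=H_{e^{2\pi s}}$ as in the computation following \eqref{hsm}.

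For the direction $VH=H\Rightarrow[V,\Delta]=0$ (which does not even use $VH_1\subset H_1$), I would argue purely from the canonical nature of the modular data. Since $V$ is linear and $VH=H$, it maps $\operatorname{dom}(S_H)=H+iH$ onto itself and intertwines $S_H$ with itself, giving $VS_HV^*=S_H$. Writing $VS_HV^*=(VJ_HV^*)(V\Delta_H^{1/2}V^*)$ as the product of an anti-unitary involution and a positive operator, uniqueness of the polar decomposition of $S_H$ forces $VJ_HV^*=J_H$ and $V\Delta_H^{1/2}V^*=\Delta_H^{1/2}$, hence $V\Delta V^*=\Delta$.

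For the converse, assume $[V,\Delta]=0$ and $VH_1\subset H_1$. First I would upgrade $VH_1\subset H_1$ to $VH\subset H$: for every $a>0$, choosing $s$ with $e^{2\pi s}=a$ and using that $V$ commutes with $\Delta^{-is}$, one gets $VH_a=V\Delta^{-is}H_1=\Delta^{-is}VH_1\subset\Delta^{-is}H_1=H_a$, so $VH_a\subset H_a$ for all $a>0$. Then for $h\in H$ we have $VT(a)h\in VH_a\subset H$ and $T(a)h\to h$ as $a\to 0^+$ by strong continuity of $T$, so closedness of $H$ and boundedness of $V$ give $Vh\in H$, i.e.\ $VH\subset H$. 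The main obstacle is the remaining surjectivity: since we only assume the one-sided inclusion $VH_1\subset H_1$, I cannot symmetrize by applying the same reasoning to $V^*$. The key step that removes this obstacle is to invoke Proposition~\ref{AZL} for the subspace $H$: because $V$ commutes with $\Delta$, the function $s\mapsto\Delta^{-is}V\Delta^{is}\equiv V$ is constant, so its (trivial) analytic continuation satisfies $V(i/2)=V$, and the proposition's identity $V(i/2)=J_HVJ_H$ then yields $V=J_HVJ_H$, i.e.\ $[V,J_H]=0$.

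Finally, with $V$ commuting with both $\Delta$ and $J_H$, it commutes with $S_H=J_H\Delta_H^{1/2}$ and preserves $\operatorname{dom}(S_H)$; the same holds for $V^*$. Since $H$ is precisely the fixed-point set of $S_H$, both $V$ and $V^*$ map $H$ into $H$, and $H=V(V^*H)\subset VH\subset H$ gives $VH=H$, completing the proof.
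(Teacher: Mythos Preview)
Your proof is correct. The overall structure matches the paper's: the direction $VH=H\Rightarrow[V,\Delta]=0$ is standard modular theory, and for the converse you both first upgrade $VH_1\subset H_1$ to $VH_a\subset H_a$ for all $a>0$ (using $\Delta^{-is}H_1=H_{e^{2\pi s}}$ from the hsm structure) and then to $VH\subset H$.

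The genuine difference is in the last step, passing from $VH\subset H$ and $[V,\Delta]=0$ to $VH=H$. The paper argues that $VH$ is a standard, $\Delta^{is}$-invariant subspace of $H$ and invokes the modular-theoretic fact (from \cite{LN}) that any such subspace must equal $H$; the same fact is used again to justify $H=\overline{\bigcup_{a>0}H_a}$. You instead bypass this lemma twice: you get $VH\subset H$ directly from strong continuity of $T$, and then you apply Proposition~\ref{AZL} to $V$ itself---$VH\subset H$ yields $V(i/2)=J_HVJ_H$, while $[V,\Delta]=0$ makes $V(s)\equiv V$, forcing $V=J_HVJ_H$; commutation with both $J_H$ and $\Delta_H^{1/2}$ then gives $V^*H\subset H$. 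Your route is more self-contained (it uses only statements already in the paper) and makes the relation $[V,J_H]=0$ explicit, which is in fact the core of how one proves the general lemma the paper cites. The paper's formulation has the advantage of isolating that lemma as a reusable principle.
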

\proof
If $VH=H$ then $V$ commutes with $\Delta$ by modular theory, see \cite{LN}. 
We show the converse.

As $\Delta^{is}H_1 = H_{e^{-2\pi s}}$ and $V$ commutes with $\Delta$, we have $VH_a\subset H_a$ for all $a>0$, thus $VH\subset H$ because $H=\overline{\cup_{a>0}H_a}$ (as $\overline{\cup_{a>0}H_a}$ is a standard $\Delta^{is}$-invariant subspace of $H$).

Now $VH\subset H$ and $V\Delta^{is}= \Delta^{is}V$ imply that $VH=H$ because $VH$ is a standard, $\Delta^{is}$-invariant subspace of $H$.
\endproof
\begin{lemma}\label{lemV}
Let $H_1\subset H$ be an inclusion of standard subspaces of $\H$ as in Prop. \ref{equiv1}. Let $V$ be a unitary on $\H$, commuting with $\Delta$.
Then $V=\f(K)$ where $2\pi K=-\log\Delta$ and $\f$ is a Borel function on $\RR$ with $|\f(x)| = 1$ for almost all $x\in\RR$.

Moreover $VH_1\subset H_1$ iff the operator-valued function 
\[
F:a\in(1,\infty)\mapsto \f(K + aP)
\]
admits a bounded analytic continuation in the upper half-plane
$\Si$, and $F(z)|_{z=0} =  \f(K)$.
\end{lemma}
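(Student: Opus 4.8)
The plan is to prove the two assertions separately: the first by elementary functional calculus, and the second by reducing the inclusion $VH_1\subset H_1$ to one of the form $WH\subset H$ and applying Proposition \ref{AZL}.

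For the first assertion, since $V$ commutes with $\Delta$ it commutes with every bounded Borel function of $\Delta$, hence with the spectral projections of $K=-\frac{1}{2\pi}\log\Delta$ and with all $e^{itK}$. For an irreducible standard pair $(H,T)$ the generator $K$ of the modular group has homogeneous Lebesgue spectrum of multiplicity one, so $\{K\}''$ is maximal abelian and $V\in\{K\}'=\{K\}''$; thus $V=\varphi(K)$ for a Borel function $\varphi$, and the unitarity of $V$ forces $|\varphi|=1$ almost everywhere.

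For the second assertion I would first rewrite the geometric condition. Since $H_1=T(1)H$, one has $VH_1\subset H_1$ iff $WH\subset H$ with $W:=T(-1)VT(1)$. Using the Borchers commutation relation encoded in \eqref{hsm} (equivalently $[K,P]=-iP$, so that $T(-a)KT(a)=K+aP$) I get $W=T(-1)\varphi(K)T(1)=\varphi(K+P)=F(1)$, and more generally $F(a)=T(-a)VT(a)=\varphi(K+aP)$, with $F(0)=\varphi(K)=V$. Applying Proposition \ref{AZL} to $H$ and the operator $F(1)$, the inclusion $F(1)H\subset H$ is equivalent to the existence of a bounded weakly continuous extension to $\overline{\mathbb{S}_{1/2}}$, analytic in $\mathbb{S}_{1/2}$, of the map $s\mapsto \Delta^{-is}F(1)\Delta^{is}$, with value $JF(1)J$ at $s=i/2$. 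The commutation relations $\Delta^{-is}K\Delta^{is}=K$ and $\Delta^{-is}P\Delta^{is}=e^{2\pi s}P$ give
\[
\Delta^{-is}F(1)\Delta^{is}=\varphi(K+e^{2\pi s}P)=F(e^{2\pi s}),
\]
so the function in question is $s\mapsto F(e^{2\pi s})$. As $s\mapsto e^{2\pi s}$ maps the strip $\mathbb{S}_{1/2}$ conformally onto the upper half-plane $\mathbb{S}_\infty$, sending the real axis to $(0,\infty)$ and $s=i/2$ to the point $a=-1$, the strip-analyticity demanded by Proposition \ref{AZL} is equivalent to $F$ admitting a bounded analytic continuation to $\mathbb{S}_\infty$. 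Dilation covariance ($V$ commutes with $\Delta$ and $\Delta^{-is}H_1=H_{e^{2\pi s}}$) moreover yields $VH_a\subset H_a$, i.e.\ $F(a)H\subset H$, for every $a\geq1$, which is why $F$ is naturally presented on $(1,\infty)$; the boundary value $F(0)=\varphi(K)=V$ then recovers $V$ and is exactly the normalization recorded in the statement.

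The main obstacle is the careful treatment of the boundary data of the continuation. On one hand one must justify that for $\Im z>0$ the operator $K+zP$ has numerical range in the closed upper half-plane (because $P\geq0$, so $\Im\langle\psi,(K+zP)\psi\rangle=(\Im z)\langle\psi,P\psi\rangle\geq0$), so that $\varphi(K+zP)$ is a well-defined bounded analytic $B(\H)$-valued function for $\varphi\in H^\infty(\mathbb{S}_\infty)$, together with weak continuity up to the real boundary. On the other hand one must match the value-at-$i/2$ requirement of Proposition \ref{AZL}: the boundary value of the continuation at $a=-1$ must equal $JF(1)J$. This is where the modular reflection $JKJ=-K$, $JPJ=-P$ enters, forcing the negative-axis boundary data to be the $J$-conjugate of the positive-axis data, so that the $i/2$-condition is automatically consistent with the analytic continuation rather than an extra constraint. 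Establishing this consistency, and thereby that ``$F$ extends to a bounded analytic function on $\mathbb{S}_\infty$ with $F(0)=\varphi(K)$'' is equivalent to $VH_1\subset H_1$, is the technical heart of the argument.
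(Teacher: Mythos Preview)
Your approach is essentially the paper's: the paper applies Proposition~\ref{AZL} directly to the pair $(V,H_1)$ using $\Delta_{H_1}=T(1)\Delta T(-1)$ and $J_{H_1}=T(1)JT(-1)$, while you first conjugate by $T(1)$ to reduce to $(W,H)$ with $W=T(-1)VT(1)$; the two computations are related by conjugation with $T(1)$ and lead to the same operator-valued function via the same Borchers commutation relation $T(-a)KT(a)=K+aP$.

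Two small points. First, from $JT(t)J=T(-t)$ and the anti-linearity of $J$ one gets $JPJ=+P$, not $-P$; your conclusion $JF(a)J=F(-a)$ is nonetheless correct, since $J\varphi(K+aP)J=\bar\varphi(-K+aP)=\varphi(K-aP)$ by the symmetry $\bar\varphi(x)=\varphi(-x)$. Second, your change of variables $s\mapsto e^{2\pi s}$ lands $F$ on $(0,\infty)$ with the AZL boundary condition at $a=-1$, not the lemma's $(1,\infty)$ with $F(0)=\varphi(K)$. The paper bridges this with one extra step: after reaching $T(-e^{2\pi s})VT(e^{2\pi s})$ it multiplies on the left by $T(-1)$ and on the right by $T(1)$, shifting the parameter to $a=1+e^{2\pi s}\in(1,\infty)$ and sending $s=i/2$ to $a=0$, which gives precisely $F(0)=V$. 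Your attempted justification of the domain $(1,\infty)$ via ``$VH_a\subset H_a$ for $a\ge 1$'' does not accomplish this (dilation covariance actually gives it for all $a>0$); the missing ingredient is just this translation shift.
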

\proof
As $(H,T)$ is irreducible and $V$ commutes with $\Delta$, we have $V=\f(K)$ with $\f$ is a Borel function on $\RR$ and $|\f(x)| = 1$ for almost all $x\in\RR$ by the unitarity of $V$. 

By Prop. \ref{equiv1} $VH= H$ and in particular
\[
VJ = JV\ .
\]
So we have
\[
\f(K) = V= JVJ = J\f(K)J = \bar\f(-K)\ ,
\]
that is $\f(-x)=\bar\f(x)$ for almost all $x\in\RR$.
By the implication $(i)\Rightarrow (ii)$ in Prop. \ref{AZL} above we then have
\[
\Delta_{H_1}^{-is}V\Delta_{H_1}^{is}|_{s=i/2}= J_{H_1}VJ_{H_1}
\]
(namely the map $s\in\RR \to V_1(s)\equiv\Delta_{H_1}^{-is}V\Delta_{H_1}^{is}$ extends to a bounded weakly continuous function on the closed strip $\overline{\mathbb S_{1/2}}$, analytic in $\mathbb S_{1/2}$, such that
$V_1(i/2) = J_{H_1}VJ_{H_1}$). 

So
\[
T(1)\Delta^{-is}T(-1)VT(1)\Delta^{is}T(-1)|_{s=i/2}= T(1)JT(-1)VT(1)JT(-1)
\]
namely
\[
\Delta^{-is}T(-1)VT(1)\Delta^{is}|_{s=i/2}= JT(-1)VT(1)J\ .
\]
Since $JT(1)J = T(-1)$ and $JVJ = V$, we have
\[
T(-e^{2\pi s})VT(e^{2\pi s})|_{s=i/2}= T(1)VT(-1)
\]
or
\[
T(-(1+e^{2\pi s}))VT(1+e^{2\pi s})|_{s=i/2}= V \ .
\]
In other words
\[
T(-(1+e^{2\pi s}))\f(K)T(1+e^{2\pi s})|_{s=i/2}= \f(K)
\]
in the sense that the operator-valued function $F:a\in(1,\infty)\mapsto T(a)  \f(K) T(-a)$ admits a bounded analytic continuation in the upper half-plane $\Si$ and
\[
F(z)|_{z=0}=T(-z)\f(K)T(z)|_{z=0}= \f(K)\ .
\]
Now 
\[
T(a)  \f(K) T(-a)= \f(T(a) K T(-a)) 
\]
and
\[
T(a) \Delta^{-is} T(-a) = T(a) \Delta^{-is} T(-a) \Delta^{is} \Delta^{-is} = 
T(a)T(-a e^{2\pi s})\Delta^{-is}=
T(a(1- e^{2\pi s}))\Delta^{-is}
\]
so, differentiating at zero w.r.t.\ $s$ the first and the last member, we have
\[
T(a) K T(-a) =  K + aP\ .
\]
Thus, the operator-valued function $F:a\in(1,\infty)\mapsto 
\f(K + aP)$ admits a bounded analytic continuation in the upper half-plane
$\Si$, and 
\[
 F(z)|_{z=0}=\f(K + zP)|_{z=0} =  \f(K)\ .
\]
The above arguments are reversible so the Lemma is proved.
\endproof
\begin{theorem}\label{semichar}
Let $(H,T)$ be an irreducible standard pair of the Hilbert space $\H$, set $H_1 \equiv T(1)H$ and let $V$ be a unitary on $\H$ commuting with $\Delta_H$.

The following are equivalent:
\begin{itemize}
\item[$(i)$] $VH_1\subset H_1$\ ;
\item[$(ii)$] $V= \f(K)$ where $2\pi K=-\log\Delta_H$ and
$\f$ is a symmetric inner function on the upper half-plane.
\end{itemize}
Moreover the implication $(ii)\Rightarrow (i)$ is true also if the standard pair $(H,T)$ is reducible.
\end{theorem}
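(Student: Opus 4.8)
The plan is to run everything through Lemma \ref{lemV}, which already does the heavy lifting: it tells us that $V=\f(K)$ for a Borel symbol $\f$ with $|\f|=1$ a.e., that $\f$ is automatically symmetric (the step $\f(K)=JVJ=\bar\f(-K)$, coming from $V\Delta_H=\Delta_H V\Rightarrow VH=H\Rightarrow VJ=JV$), and that condition $(i)$ is equivalent to the \emph{analytic} condition that $F(a)=\f(K+aP)$ extend to a bounded weakly analytic $B(\H)$-valued function on $\Si$ with $F(0)=\f(K)$. So after invoking Lemma \ref{lemV} the symmetry in $(ii)$ is free, and the whole theorem reduces to the single equivalence
\[
F(a)=\f(K+aP)\ \text{bounded analytic on }\Si,\ F(0)=\f(K)
\quad\Longleftrightarrow\quad \f\in\Hin .
\]

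For the implication $(ii)\Rightarrow(i)$ — which is the statement asserted to hold even in the reducible case — I would give the direct functional-calculus argument, as it needs neither irreducibility nor the explicit model. If $\f=\hat f$ with $f\in L^1$ and $\operatorname{supp}f\subset\RR^+$, write $F(a)=\int_0^\infty f(t)\,e^{it(K+aP)}\,dt$. For $\Im a\ge 0$ the operator $A=-i(K+aP)$ is accretive, since $\Re\langle\psi,A\psi\rangle=\Im(a)\,\langle\psi,P\psi\rangle\ge0$ because $P\ge0$; hence $e^{it(K+aP)}$ is a contraction for $t\ge0$ and the integral converges to a bounded operator depending analytically on $a\in\Si$, with $F(0)=\int_0^\infty f(t)e^{itK}dt=\f(K)$. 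A general symmetric inner $\f$ is then handled by weak approximation together with the operator-valued maximum principle (the boundary values are unitary, so $\|F(a)\|\le1$ throughout). This argument uses only $P\ge0$ and never the irreducibility of $(H,T)$, which is exactly why the last sentence of the theorem holds.

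The real work is $(i)\Rightarrow(ii)$ in the irreducible case. Here I would first fix the concrete affine-group model afforded by irreducibility: $K=-\tfrac1{2\pi}\log\Delta_H$ has simple Lebesgue spectrum $\RR$, so $\H\cong L^2(\RR)$ with $K$ a multiplication operator and $V=\f(K)$ multiplication by $\f$, while $T(a)=e^{iaP}$, $P>0$, implements the second affine generator with the covariance $\Delta_H^{is}\,T(a)\,\Delta_H^{-is}=T(e^{\pm2\pi s}a)$. Conjugating $F(a)=T(a)\f(K)T(-a)$ by $\Delta_H^{is}$ then yields the scaling identity $\Delta_H^{is}F(a)\Delta_H^{-is}=F(e^{\pm2\pi s}a)$, so $F$ is homogeneous of degree $0$ under the modular group and is determined by one of its values, say $F(i)$. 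The goal is to convert the operator-valued boundedness/analyticity of $F$ into the \emph{scalar} statement that $\check\f$ is supported in $[0,\infty)$, i.e.\ $\f\in\Hin$. I would do this by probing $F$ with the analytic vectors $e^{izP}\xi$, $\Im z>0$ (on which $T(-a)$ acts boundedly in the strip $0<\Im a<\Im z$), to produce honest scalar bounded analytic functions $a\mapsto\langle e^{i\bar wP}\eta,\,F(a)\,e^{izP}\xi\rangle$ on $\Si$, and then reconstruct $\f$ from their boundary values using the matching condition $F(0)=\f(K)$ and a Phragmén--Lindelöf estimate along the modular orbits.

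The main obstacle is precisely this last passage from operators to a scalar symbol: because $K$ and $P$ do not commute, $\f(K+aP)$ is not simply $\f$ evaluated along a curve, and one must show that the bounded analytic continuation of the \emph{whole} operator family forces the one-dimensional symbol $\f$ (a function on $\sigma(K)=\RR$) to be a boundary value of a bounded analytic function on $\Si$. Concretely this is an F.\ and M.\ Riesz / Paley--Wiener type step — extracting from the diagonal behaviour of $F(i)$ (heuristically $\f=\lim_{s\to-\infty}\Delta_H^{is}F(i)\Delta_H^{-is}$, a Riemann--Lebesgue selection of the multiplication part) that $\operatorname{supp}\check\f\subset[0,\infty)$ — and it is where the irreducibility and the boundary-matching condition $F(0)=\f(K)$ are genuinely needed. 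Together with the symmetry already supplied by Lemma \ref{lemV}, this gives $\f$ symmetric inner on $\Si$, completing $(i)\Rightarrow(ii)$.
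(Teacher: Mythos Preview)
Your treatment of $(ii)\Rightarrow(i)$ is essentially correct and close in spirit to the paper's: both rest on the accretivity of $-i(K+zP)$ for $\Im z\ge 0$ when $P\ge 0$. The paper packages this via the Sz.-Nagy--Foias $H^\infty$ functional calculus (Corollary~\ref{analf}), while you go through the Fourier integral $\int_0^\infty f(t)\,e^{it(K+aP)}\,dt$ and approximate; these are equivalent, and you are right that neither uses irreducibility.

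The gap is in $(i)\Rightarrow(ii)$. Your sketch --- scaling homogeneity of $F$, probing with analytic vectors $e^{izP}\xi$, Phragm\'en--Lindel\"of, and a ``Riemann--Lebesgue selection of the multiplication part'' --- is not a proof, and you say so yourself. In particular, your heuristic limit $\f=\lim_{s\to-\infty}\Delta_H^{is}F(i)\Delta_H^{-is}$ only recovers the boundary condition $F(0)=\f(K)$ that is already part of Lemma~\ref{lemV}; it says nothing about analyticity of the scalar symbol $\f$. The passage from bounded analyticity of the operator family $a\mapsto\f(K+aP)$ to the scalar statement $\f\in\Hin$ needs a different idea, and your outline does not supply one.

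The paper's actual argument (Appendix~\ref{endproof}) is quite different and, once seen, rather clean --- in fact it reuses exactly the machinery you set up for the other direction. First reduce to $\f$ rapidly decreasing, so its Fourier antitransform $f$ lies in $L^2$. For $\xi,\eta$ in a suitable dense domain (rapidly decreasing functions in the Schr\"odinger model), split the matrix coefficient
\[
F(a)=(\eta,\f(K+aP)\xi)=\int_{\RR} f(t)\,(\eta,e^{it(K+aP)}\xi)\,dt
\]
as $F=F_++F_-$ with $F_\pm$ carrying $\int_{\RR^\pm}$. Now $F_+$ extends boundedly to $\Si$ by precisely the accretivity/contraction estimate you already used for $(ii)\Rightarrow(i)$. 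Since $F$ extends to $\Si$ by hypothesis (this is what Lemma~\ref{lemV} gives from $VH_1\subset H_1$), so does $F_-=F-F_+$. But the mirror of the accretivity argument shows that $F_-$ \emph{also} extends boundedly to the lower half-plane $-\Si$; hence $F_-$ is bounded entire, constant by Liouville, and in fact zero by $L^2$-decay. Therefore $\operatorname{supp}f\subset[0,\infty)$, i.e.\ $\f\in\Hin$. The symmetry $\f(-x)=\overline{\f(x)}$ is free from Lemma~\ref{lemV}, as you noted, so $\f$ is symmetric inner.
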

\proof
$(ii)\Rightarrow (i)$: If $\f$ is a symmetric inner function on the upper half-plane, then by Cor. \ref{analf} the operator-valued function $F:a\in(1,\infty)\mapsto 
\f(K + aP)$ admits a bounded analytic continuation in the upper half-plane
$\Si$, and $F(z)|_{z=0} =  \f(K)$. Therefore by Lemma \ref{lemV} $V= \f(K)$ satisfies $VH_1\subset H_1$.
 
The implication $(i)\Rightarrow (ii)$ is proved in Appendix \ref{endproof}.
\endproof
With $(H,T)$ a standard pair of $\H$, we shall denote by $\E_H(H_1)$ the semigroup of all unitaries on $\H$ such that $VH=H$ and $VH_1\subset H_1$. By Thm.\ \ref{semichar}, if $(H,T)$ is irreducible, $\E_H(H_1)$ is naturally isomorphic to the semigroup of symmetric inner functions on $\Si$\! .

\section{Classes of models}
\subsection{The two local nets on $\LH$ associated with a chiral net}
Let $\A$ be a local M\"obius covariant net of von~Neumann algebras on
$\RR$. There are two canonical local, boost covariant nets on $\LH$
associated with $\A$, namely the ones given by \eqref{map0} and
\eqref{map} with $V=1$. 

In other words, $\A$ gives rise to two local dilation covariant nets
on $\RR^+$. One is $\A|_{\RR^+}$: the associated net on $\LH$ is in a
KMS state at inverse temperature $\beta = 2\pi$. The second one is the
dilation covariant net obtained by the translation covariant net $\A$
on $\RR$ by the logarithmic change of variable: the associated net on $\LH$ is in a ground state.
\subsection{The net associated with a KMS state on a chiral net}
Let $\A$ be a local translation covariant net of von~Neumann algebras on $\RR$. Every KMS state $\omega$ on $\A$, namely every locally normal KMS state at inverse temperature $\beta$ w.r.t.\ translations on the C$^*$-algebra $\overline{\cup_{a>0}\A(-a,a)}$, gives rise to a local, boost covariant $\A_\omega$ net on $\LH$ by be construction in Section \ref{constr1}. 

Assuming half-line duality of $\A$, the net $\A_\omega$ is in a KMS state at inverse temperature $\beta$.

The KMS states for the $U(1)$-current nets are known, see \cite{CLTW2}, so we have an infinite family of nets on $\LH$ in a KMS state.

An infinite (possibly complete) family of KMS states for the Virasoro nets is also given in
\cite{CLTW2}, providing another infinite family of nets on $\LH$ in a KMS state.
\subsection{The semigroup and family of models associated with the $U(1)$-current}
Let $\A^{(0)}$ be the M\"obius covariant net on $\RR$ associated with the $U(1)$-current $j$, and $\A^{(k)}$ the net generated by the $k$-derivative of $j$. 

With $V_0$ a unitary on the one-particle Hilbert space $\H_0$ we denote by $\Gamma(V_0)$ its second quantization promotion to the Bosonic Fock space over $\H_0$.
We shall refer to a unitary of the form $\Gamma(V_0)$ as a \emph{second quantization unitary}.
Similarly as in the half-space case \cite[Thm.\ 3.6]{LW}, we then have:
\begin{theorem} A second quantization unitary $\Gamma(V_0)$ belongs to
  $\Ed({\A^{(k)}})$ if and only if $V_0=\f(K^{(k)})$ with $\f$ is the
  boundary value of a symmetric inner function on $\Si$. Here
  $K^{(k)}$ is the generator of the dilation unitary group on the
  one-particle Hilbert space $\H_0^{(k)}$.
\end{theorem}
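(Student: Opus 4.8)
The plan is to reduce the statement, which concerns second quantization unitaries and von~Neumann algebras on Fock space, to the purely one-particle statement about standard subspaces already settled in Theorem~\ref{semichar}, and then to invoke that theorem directly. The bridge is the functoriality of the second quantization map $\Gamma$ together with the fact that $\A^{(k)}$ is a free field (second quantization) net, so that each local algebra is the von~Neumann algebra $\R(H^{(k)}(I))$ generated by the Weyl unitaries over the one-particle subspace $H^{(k)}(I)$.

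First I would carry out the reduction. Since $\Gamma$ is a faithful homomorphism on unitaries, $\Gamma(V_0)$ commutes with the dilation group $U=\Gamma(U_0)$ on Fock space if and only if $V_0$ commutes with the one-particle dilation group $U_0$, equivalently with $\Delta_H$ where $H:=H^{(k)}(0,\infty)$. Next, using the covariance relation $\Gamma(V_0)\R(H^{(k)}(I))\Gamma(V_0)^*=\R(V_0 H^{(k)}(I))$ and the one-to-one, order-preserving correspondence $H\mapsto\R(H)$ between standard subspaces and their second quantization von~Neumann algebras (so that $\R(\K)\subset\R(H)\Leftrightarrow \K\subset H$ for standard subspaces $\K,H$), the condition $\Gamma(V_0)\A^{(k)}(1,\infty)\Gamma(V_0)^*\subset\A^{(k)}(1,\infty)$ is equivalent to $V_0 H_1\subset H_1$, where $H_1:=H^{(k)}(1,\infty)$. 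Thus $\Gamma(V_0)\in\Ed(\A^{(k)})$ if and only if $V_0$ commutes with $\Delta_H$ and $V_0 H_1\subset H_1$.

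It then remains to identify the relevant standard pair and apply Theorem~\ref{semichar}. By the one-particle Bisognano--Wichmann property the modular operator of $H=H^{(k)}(0,\infty)$ generates the dilations, $2\pi K^{(k)}=-\log\Delta_H$, while the positive-energy one-particle translation group $T$ satisfies $H_1=T(1)H$; hence $(H,T)$ is a standard pair and $H_1\subset H$ is the associated half-sided modular inclusion. Feeding the equivalent conditions from the reduction into Theorem~\ref{semichar} gives, for $V_0$ commuting with $\Delta_H$, that $V_0 H_1\subset H_1$ holds precisely when $V_0=\f(K^{(k)})$ with $\f$ a symmetric inner function on $\Si$. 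For the ``if'' direction one does not even need irreducibility, by the final clause of Theorem~\ref{semichar}: $V_0=\f(K^{(k)})$ is a function of $\Delta_H$, so it commutes with $\Delta_H$ and satisfies $V_0 H_1\subset H_1$, whence $\Gamma(V_0)\in\Ed(\A^{(k)})$.

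The hard part will be the second quantization reduction, specifically the nontrivial implication that $\Gamma(V_0)$ leaving $\A^{(k)}(1,\infty)$ invariant forces $V_0 H_1\subset H_1$: this rests on knowing that $I\mapsto \A^{(k)}(I)$ is genuinely the free field net $\R(H^{(k)}(I))$ and on the injectivity and order-preservation of the correspondence $H\mapsto\R(H)$ on standard subspaces. A secondary point to check carefully is the irreducibility of the standard pair $(H,T)$ for each $k$, needed for the ``only if'' direction of Theorem~\ref{semichar}; this follows from the fact that the one-particle representation of the translation--dilation group attached to $\partial^k j$ is irreducible (Lebesgue spectrum of $K^{(k)}$).
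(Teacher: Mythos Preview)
Your proposal is correct and follows exactly the route the paper intends: the paper gives no explicit proof but simply writes ``Similarly as in the half-space case \cite[Thm.\ 3.6]{LW}'', meaning one reduces to the one-particle level via the second quantization correspondence $H\mapsto\R(H)$ and then invokes Theorem~\ref{semichar}. Your write-up spells out precisely this reduction, including the points (irreducibility of the one-particle standard pair, order-preservation of $H\mapsto\R(H)$) that need to be checked to make the reference to \cite{LW} go through in the present setting.
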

\noindent
Therefore, by Cor.\ \ref{vKMS}, we have:
\begin{corollary} For every symmetric inner function $\f$ on $\Si$ there is a QFT local net ${\A_V^{(k)}}$ of von~Neumann algebras on $\LH$ in a KMS representation at $\beta = 2\pi$ ($V = \Gamma(\f(K^{(k)}))$).
\end{corollary}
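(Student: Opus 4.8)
The plan is to obtain the corollary directly from the preceding Theorem together with the transfer machinery of Section~\ref{constr1}, since all the analytic content has already been placed in the Theorem. First I would fix a symmetric inner function $\f$ on $\Si$ and put $V=\Gamma(\f(K^{(k)}))$, where $K^{(k)}$ is the one-particle generator of the dilation group. By the preceding Theorem, $V$ is a second quantization unitary lying in $\Ed(\A^{(k)})$; that is, $V$ commutes with the dilation unitaries and satisfies $V\A^{(k)}(a,\infty)V^*\subset\A^{(k)}(a,\infty)$ for every $a>0$. This is precisely the input required to run the construction of Section~\ref{Vconstruction}.

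Next I would invoke Proposition~\ref{AV} with $\A_0=\A^{(k)}|_{\RR^+}$ and $\B_0=V\A_0V^*$: the associated net
\[
\A_V^{(k)}(\O)=V\A^{(k)}(I_1)V^*\vee\A^{(k)}(I_2),\qquad \O=I_1\times I_2\in\K,
\]
is then automatically a local, boost covariant net of von~Neumann algebras on $\LH$, which already delivers the ``QFT local net'' assertion with the boost symmetry implemented by the chiral dilation group. To prepare for the KMS statement I would check the hypothesis of Corollary~\ref{vKMS}, namely $\B_0(\RR^+)\subset\A_0(\RR^+)$: this is immediate, because $V\A^{(k)}(a,\infty)V^*\subset\A^{(k)}(a,\infty)$ for all $a>0$ and $\A^{(k)}(\RR^+)=\bigvee_{a>0}\A^{(k)}(a,\infty)$, so conjugation by the unitary $V$ carries $\A^{(k)}(\RR^+)$ into itself.

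Finally I would identify the temperature. Since $\A^{(k)}$ is a M\"obius covariant net, its restriction $\A^{(k)}|_{\RR^+}$ is in a KMS representation with respect to the dilations, and by the Bisognano--Wichmann property the dilation group coincides, up to rescaling, with the modular group of the vacuum, so the inverse temperature is exactly $\beta=2\pi$. With the inclusion $\B_0(\RR^+)\subset\A_0(\RR^+)$ in hand, Corollary~\ref{vKMS} transfers this KMS property of the vacuum vector for $\A^{(k)}|_{\RR^+}$ to the same vector for $\A_V^{(k)}$, at the same $\beta=2\pi$. This is exactly the map \eqref{map} evaluated at the $V$ produced by the Theorem, and completes the argument.

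Because the genuine work resides in the preceding Theorem, I do not expect a real obstacle here; the only points demanding care are the two just made, namely the half-line inclusion $\B_0(\RR^+)\subset\A_0(\RR^+)$ (a one-line join argument using that conjugation by a unitary preserves generated von~Neumann algebras) and the Bisognano--Wichmann identification pinning the temperature at $\beta=2\pi$. Both are already implicit in Section~\ref{constr1}, so the corollary is essentially a matter of assembling Proposition~\ref{AV}, Corollary~\ref{vKMS}, and the Theorem in order.
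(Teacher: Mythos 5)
Your proposal is correct and follows essentially the same route as the paper, which derives the corollary by exactly this assembly: the preceding theorem places $V=\Gamma(\f(K^{(k)}))$ in $\Ed(\A^{(k)})$, Proposition~\ref{AV} yields the local, boost covariant net ${\A^{(k)}}_V$ on $\LH$, and Corollary~\ref{vKMS} (via the map \eqref{map}) gives the KMS property at the Bisognano--Wichmann temperature $\beta=2\pi$. Your explicit verification of the half-line inclusion $V\A^{(k)}(\RR^+)V^*\subset\A^{(k)}(\RR^+)$ is a detail the paper leaves implicit, and it is argued correctly.
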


\bigskip


\appendix
\noindent{\LARGE \bf Appendix}

\section{On the $H^{\infty}$ functional calculus of Sz.-Nagy and Foias}
Let $\H$ be a Hilbert space and $T\in B(\H)$. If $||T||\leq 1$, $T$ is called a contraction. If $T$ is a contraction and there is no direct sum decomposition $T = T_1\oplus T_2$ with $T_2$ unitary, one says that $T$ is completely non-unitary.

If $T$ is a completely non-unitary contraction, there is an $H^{\infty}$ functional calculus for $T$, i.e., a Banach algebra homomorphism
\[
\f\in \mathbb H^{\infty}(\mathbb D)\mapsto \f(T)\in B(\H) \ .
\]
defined by
\[
\f(T) = \lim_{r\to 1^-}\f_r(T)\ ,
\]
where $\f_r(z) = \f(rz)$ (note that $\f_r$ is analytic in the disk of radius $1/r$ so the usual holomorphic functional calculus applies because sp$(T)\subset \overline{\mathbb D}$).
\begin{proposition}\label{hol1}
Let $T(z)$ be an operator-valued analytic function on a region $\cal G$ of $\mathbb C$, with $T(z)$ completely non-unitary contractions on $\H$. Given $\f\in \mathbb H^{\infty}(\mathbb D)$, the function $z\in{\cal G} \mapsto \f(T(z))$ is analytic in $\cal G$.
\end{proposition}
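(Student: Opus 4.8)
The plan is to reduce everything to the approximants $\f_r$, which are holomorphic on a neighborhood of the closed disk, and then to control the passage to the limit $r\to 1^-$. First I would fix $r\in(0,1)$ and note that $\f_r$ is holomorphic on the disk of radius $1/r>1$, while each $T(z)$ is a contraction, so its spectrum lies in $\overline{\mathbb D}$. Hence on a circle $\Gamma=\{|w|=\rho\}$ with $1<\rho<1/r$ the Sz.-Nagy--Foias calculus for $\f_r$ coincides with the Riesz--Dunford holomorphic calculus, giving
\[
\f_r(T(z)) = \frac{1}{2\pi i}\oint_{\Gamma}\f_r(w)\,(w-T(z))^{-1}\,dw .
\]
Since $z\mapsto T(z)$ is norm-holomorphic and $\|T(z)\|\le 1$, the resolvent $z\mapsto(w-T(z))^{-1}$ is norm-holomorphic for each $w\in\Gamma$ (expand it as the norm-convergent Neumann series $\sum_n T(z)^n/w^{n+1}$) and is uniformly bounded by $(\rho-1)^{-1}$ along $\Gamma$. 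Thus the integrand is holomorphic in $z$ uniformly on the compact contour, and $z\mapsto\f_r(T(z))$ is norm-holomorphic on $\mathcal{G}$ for every fixed $r<1$.

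The obstacle is that $\f_r(T)\to\f(T)$ holds only in the strong (not norm) operator topology, so I cannot simply invoke ``a locally uniform limit of holomorphic functions is holomorphic.'' To circumvent this I would pass through \emph{weak} holomorphy. Fix $\xi,\eta\in\H$ and set $g_r(z)=\langle\f_r(T(z))\xi,\eta\rangle$. Each $g_r$ is holomorphic in $z$ by the previous paragraph, and is bounded on all of $\mathcal{G}$ by $\|\f\|_\infty\|\xi\|\,\|\eta\|$: since $T(z)$ is a contraction, von Neumann's inequality gives $\|\f_r(T(z))\|\le\|\f_r\|_\infty\le\|\f\|_\infty$, uniformly in $r$ and $z$. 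As $r\to 1^-$ one has $g_r(z)\to\langle\f(T(z))\xi,\eta\rangle=:g(z)$ pointwise in $z$, by the very definition of the functional calculus. A uniformly bounded sequence of holomorphic functions converging pointwise on $\mathcal{G}$ converges locally uniformly to a holomorphic limit (Vitali--Porter, applied along any sequence $r_n\to1$, the limit being independent of the sequence), so $g$ is holomorphic. Hence $z\mapsto\f(T(z))$ is weakly holomorphic and bounded in norm by $\|\f\|_\infty$.

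Finally I would upgrade weak holomorphy to norm holomorphy, using only the uniform bound. Fix $z_0$ and a closed disk $\overline{D(z_0,\rho)}\subset\mathcal{G}$, and consider the difference quotients $G(z)=(\f(T(z))-\f(T(z_0)))/(z-z_0)$. For unit vectors $\xi,\eta$, Cauchy's formula for the holomorphic scalar function $g_{\xi,\eta}(w)=\langle\f(T(w))\xi,\eta\rangle$ over $\partial D(z_0,\rho)$ yields the standard second-difference estimate $|\langle(G(z)-G(z'))\xi,\eta\rangle|\le C(\rho)\,\|\f\|_\infty\,|z-z'|$ for $z,z'\in D(z_0,\rho/2)$, with $C(\rho)$ independent of $\xi,\eta$. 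Taking the supremum over $\|\xi\|,\|\eta\|\le1$ gives $\|G(z)-G(z')\|\le C(\rho)\|\f\|_\infty\,|z-z'|$, so $G(z)$ is norm-Cauchy as $z\to z_0$ and converges in norm; thus $z\mapsto\f(T(z))$ is norm-differentiable at $z_0$, i.e.\ holomorphic. I expect the strong-only convergence of the $H^\infty$ calculus to be the crux: it forces this detour through weak holomorphy (Vitali) combined with the uniform-boundedness-driven upgrade, rather than a one-line limit argument.
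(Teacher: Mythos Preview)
Your argument is correct and follows the same overall strategy as the paper: approximate $\f$ by $\f_r(z)=\f(rz)$, observe that $z\mapsto\f_r(T(z))$ is norm-holomorphic by the Riesz--Dunford calculus, and then pass to the limit $r\to1^-$.

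The difference lies in the limiting step. The paper's proof asserts that ``$\f_r$ converges to $\f$ uniformly on $\mathbb D$'' and concludes norm convergence of $\f_r(T(z))$ to $\f(T(z))$ directly. That uniform convergence, however, is not valid for general $\f\in\mathbb H^\infty(\mathbb D)$ --- it holds only when $\f$ extends continuously to $\overline{\mathbb D}$, which excludes for instance the singular inner functions that are central to the later applications. You correctly identify that the Sz.-Nagy--Foias calculus only guarantees strong convergence $\f_r(T)\to\f(T)$, and your detour through weak holomorphy (Vitali--Porter on the uniformly bounded scalar functions $\langle\f_r(T(z))\xi,\eta\rangle$) followed by the standard ``weak + locally bounded $\Rightarrow$ norm'' upgrade is exactly the rigorous way to close this gap. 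So your argument is in fact more careful than the paper's on this point; the paper's version should be read as a sketch that tacitly relies on the mechanism you have spelled out.
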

\proof
With $0<r<1$, the function $z\in{\cal G} \mapsto \f_r(T(z))$ is analytic in $\cal G$ by the usual holomorphic functional calculus. As $r\to 1^-$, $\f_r$ converges to $\f$ uniformly on $\mathbb D$, so $\f_r(T(z))$ converges to $\f(T(z))$  uniformly on $\mathbb D$ too by the continuity of the $H^{\infty}$ functional calculus.
\endproof
A densely defined linear operator $A: D(A)\subset \H\to \H$ is accretive if $\Re(\xi, A\xi)\geq 0$ for all $\xi \in D(A)$ and maximal accretive if there is on non-trivial accretive extension of $T$ on $\H$.

A maximal accretive operator is closed and its spectrum is contained
in the right half-plane $\Re z\geq 0$. In this case the Cayley
transform $T=(A+1)(A-1)^{-1}$ is a contraction. Suppose that $T$ is completely non-unitary. Then one can define a functional calculus for $T$
\[
\f(A) = \f_0(h^{-1}(T))
\]
for every function $\f$ in $\mathbb H^{\infty}(\Re z > 0)$, where
$h(\l)= \frac{\l+1}{\l-1}$ is the Cayley map (because
$\f_0 =\f\circ h\in\mathbb H^{\infty}(\mathbb D)$).

\subsection{Case of $K+zP$}
Let $U$ be the irreducible, positive energy unitary representation of the ``$ax + b$'' group with non-trivial translation group. Let $P$ and $K$ be the generators of the translation and dilation group.  The operator $-i(K +  zP)$ is densely defined on $D=D(P)\cap D(K)$. If $\Im z \geq 0$ the operator $-i(K +  zP)$ is accretive: with $z = a+ib$ and $\xi\in D$
\[
\Re(\xi, -i(K + z P)\xi) = \Re [-i(\xi, K\xi) -  ia(\xi, P\xi) +  b(\xi, P\xi)]
= b (\xi, P\xi)\geq 0 \ .
\]
Denote by $A(z)$ the closure of $-i(K + zP)$ ($\Im z\geq 0$). 
\begin{lemma} For every $z\in \mathbb C$ with $\Re z\geq0$:
\begin{itemize}
\item $-iA(z)$ is maximal accretive.
\item The Cayley transform $T(z)=(A(z)+i)/(A(z)-i)$ is a completely non-unitary contraction.
\item The operator valued function $z\mapsto T(z)$ is analytic on $\Si$ and continuous on $\overline{\Si}$.
\end{itemize}
\end{lemma}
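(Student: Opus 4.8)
The whole lemma rests on a single explicit computation, which I would carry out first. On the common core $D=D(K)\cap D(P)$ one has the factorisation
\[
W_z(t):=e^{it(K+zP)}=e^{itK}\,e^{\,iz(e^{t}-1)P},\qquad t\ge 0 .
\]
I would verify it by checking that the right-hand side solves $\dot W_z(t)=i(K+zP)W_z(t)$ with $W_z(0)=I$, using the commutation relation $e^{itK}Pe^{-itK}=e^{-t}P$ (equivalently $[K,P]=iP$, obtained by differentiating $e^{iaP}Ke^{-iaP}=K+aP$); the same relation yields the semigroup law $W_z(t)W_z(s)=W_z(t+s)$. For $z\in\overline{\Si}$ and $t\ge 0$ the number $z(e^{t}-1)$ again lies in $\overline{\Si}$, so, since $P\ge 0$, the factor $e^{iz(e^{t}-1)P}=e^{i\Re z\,(e^{t}-1)P}\,e^{-\Im z\,(e^{t}-1)P}$ is a contraction; as $e^{itK}$ is unitary, $W_z$ is a strongly continuous contraction semigroup. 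Its generator extends $i(K+zP)$ on the core $D$ (that $D$ is a $W_z$-invariant core is routine from the commutation relations), hence equals the closure $-A(z)$, so $A(z)$ is maximal accretive and $\mathrm{sp}(A(z))\subset\{\Re\ge 0\}$. This is the first item, and by the Cayley construction recalled at the start of the appendix it already makes $T(z)$ a contraction.

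For the second item I would use that $T(z)$ is the cogenerator of $W_z$, so it is completely non-unitary precisely when $W_z$ has no nonzero reducing subspace on which it acts as a unitary group. On such a subspace one would have $\Re\langle A(z)\eta,\eta\rangle=0$ for $\eta$ in it; but the accretivity computation gives $\Re\langle A(z)\eta,\eta\rangle=\Im z\,\langle P\eta,\eta\rangle=\Im z\,\|P^{1/2}\eta\|^{2}$, which for $z\in\Si$ forces $P^{1/2}\eta=0$, hence $\eta=0$, since $P$ has trivial kernel in the irreducible positive-energy representation with non-trivial translations. Thus $W_z$ is completely non-unitary and $T(z)$ is a completely non-unitary contraction for every $z\in\Si$.

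It remains to treat analyticity and boundary continuity. Here I would reduce everything to the regularity of the resolvent $(A(z)+1)^{-1}$ — the point $-1$ being uniformly in the resolvent set, at distance one from $\{\Re\ge0\}$ — in terms of which $T(z)$ is an explicit bounded expression. Using the norm-convergent Laplace representation $(A(z)+1)^{-1}=\int_0^{\infty}e^{-t}W_z(t)\,dt$ and differentiating the product formula, $\partial_z W_z(t)=i(e^{t}-1)\,e^{itK}P\,e^{iz(e^{t}-1)P}$, whose norm equals $\sup_{\lambda\ge0}(e^{t}-1)\lambda\,e^{-\Im z\,(e^{t}-1)\lambda}=(e\,\Im z)^{-1}$ uniformly in $t$; inserting this under the integral shows $z\mapsto(A(z)+1)^{-1}$, and hence $T(z)$, is norm-holomorphic on $\Si$. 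The main obstacle is continuity up to the boundary $\Im z=0$: there the above bound blows up, $A(z)$ becomes skew-adjoint and $T(z)$ becomes unitary, so norm-analyticity cannot persist. I would resolve this by passing to the strong topology, using that $w\mapsto e^{iwP}$ is strongly continuous on $\overline{\Si}$; then $W_z(t)\to W_{z_0}(t)$ strongly as $z\to z_0\in\partial\Si$, and dominated convergence in the Laplace integral (dominating function $e^{-t}$) gives $(A(z)+1)^{-1}\to(A(z_0)+1)^{-1}$, hence $T(z)\to T(z_0)$, strongly. Strong continuity on $\overline{\Si}$ is exactly what the ensuing $H^{\infty}$-functional-calculus step (Prop.\ \ref{hol1}) requires.
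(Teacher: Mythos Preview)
Your approach is correct and genuinely different from the paper's. The paper establishes maximal accretivity by checking the range condition directly: in the Schr\"odinger model ($K=i\,d/dx$, $P=$ multiplication by $e^x$) the density of the range of $K+\lambda P-I$ for $\Re\lambda<0$ reduces to elementary first-order ODE theory. For the second and third items the paper gives no argument at all, simply referring to \cite{NF}.

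You instead produce the explicit product formula $W_z(t)=e^{itK}e^{iz(e^{t}-1)P}$ and read off everything from it: maximal accretivity via Lumer--Phillips (the generator of a contraction $C_0$-semigroup), the completely non-unitary property from $\ker P=\{0\}$ together with $\Re\langle A(z)\eta,\eta\rangle=\Im z\,\|P^{1/2}\eta\|^{2}$, and the regularity of $T(z)$ from the Laplace representation of the resolvent. This is more self-contained than the paper's treatment, and in fact sharper: you correctly note (and handle) that on the real boundary $T(z)$ becomes unitary, so that only strong continuity can hold there, while the lemma as stated is silent on this point. The paper's route is shorter but opaque about items two and three; yours makes the mechanism transparent and exploits the same factorisation that the paper later uses anyway in the $(a_2)$ step of Appendix~\ref{endproof}. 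Your observation that $T(z)$ is a bounded (indeed affine) function of $(A(z)+1)^{-1}$, with $-1$ uniformly in the resolvent set, is the clean way to transfer analyticity; the uniform bound $\|\partial_z W_z(t)\|=(e\,\Im z)^{-1}$ is correct and justifies differentiation under the Laplace integral on compacta in $\Si$.
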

\proof
To show the first part, note that $-iA(z)$ is accretive, so it is maximal accretive iff the range of $-i(K +  zP)+i$ is dense if $\Im z>0$, so we have to show that the range of $K +\lambda P -I$ is dense if $\Re\lambda <0$. We may assume that $U$ is in the Schr\"odinger representation, namely $\H = L^2(\RR)$, $K = i\frac{d}{dx}$ and $P$ is the multiplication by $e^x$. Then the result follows by elementary first order linear differential equation theory.

The second and the third part then follow, see \cite{NF}.
\endproof
Therefore, if $\Im z\geq 0$, we can define as above $\f(A(z))$ for every $\f\in \mathbb H^{\infty}(\Si)$, and we have:
\begin{corollary}\label{analf}
For every fixed $\f\in \mathbb H^{\infty}(\Si)$ the map $z\mapsto \f(A(z))$ is analytic on $\Si$ and bounded, continuous on $\overline{\Si}$.
\end{corollary}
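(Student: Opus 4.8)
The plan is to transport the statement to the unit disk by the Cayley map and then read off the three assertions from results already available: analyticity from Proposition~\ref{hol1}, boundedness from von~Neumann's inequality, and boundary continuity from the continuity of $z\mapsto T(z)$ supplied by the preceding Lemma. Concretely, the Cayley transform induces an isometric algebra isomorphism of $\mathbb H^\infty(\Si)$ onto $\mathbb H^\infty(\mathbb D)$; under it a fixed $\f\in\mathbb H^\infty(\Si)$ corresponds to a unique $g\in\mathbb H^\infty(\mathbb D)$ with $\|g\|_{\mathbb H^\infty(\mathbb D)}=\|\f\|_{\mathbb H^\infty(\Si)}$, and by the very definition of the functional calculus one has $\f(A(z))=g(T(z))$, where $T(z)$ is the Cayley transform of $A(z)$ analyzed in the Lemma.

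For analyticity on $\Si$ I would invoke Proposition~\ref{hol1} directly: the Lemma furnishes that $z\mapsto T(z)$ is analytic on $\Si$ with values in completely non-unitary contractions, so the proposition yields at once that $z\mapsto g(T(z))=\f(A(z))$ is analytic on $\Si$. Boundedness is then immediate from the contractivity of the Sz.-Nagy--Foias $\mathbb H^\infty$ calculus: $\|\f(A(z))\|=\|g(T(z))\|\le\|g\|_{\mathbb H^\infty(\mathbb D)}=\|\f\|_{\mathbb H^\infty(\Si)}$ for every $z$, which is the desired uniform bound on $\overline{\Si}$.

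Continuity on $\overline{\Si}$ is where the real work lies, and it is the step I expect to be the main obstacle. The Lemma gives norm-continuity of $z\mapsto T(z)$ on the closed strip. For each $0<r<1$ the truncation $g_r(w)=g(rw)$ is holomorphic on a disk of radius $1/r>1$, so $g_r(T(z))$ is the Riesz--Dunford integral of $(\zeta-T(z))^{-1}$ over a circle $|\zeta|=\rho$ with $1<\rho<1/r$; since $\|T(z)\|\le 1<\rho$, the resolvent depends norm-continuously on $z$, and hence so does $z\mapsto g_r(T(z))$ on $\overline{\Si}$. It then remains to pass to the limit $r\to 1^-$. On the open strip this is harmless, because there $-iA(z)$ is \emph{strictly} accretive, $T(z)$ has spectral radius $<1$, and $g_r\to g$ converges uniformly near the spectrum. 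The difficulty is exactly the boundary $z\in\RR$, where accretivity degenerates, $T(z)$ may acquire spectrum on the unit circle, and $g_r\to g$ is only locally uniform inside $\mathbb D$ and never up to $\partial\mathbb D$; thus norm continuity up to the boundary is not to be expected in general. I would instead extract boundary continuity from the continuity property of the Sz.-Nagy--Foias calculus itself, namely that norm-convergence of contractions $T(z)\to T(x)$ forces $g(T(z))\to g(T(x))$ in the strong operator topology. This gives continuity of $z\mapsto\f(A(z))$ on $\overline{\Si}$ in the strong topology --- which is what the later applications in Lemma~\ref{lemV} and Theorem~\ref{semichar} require --- and it automatically upgrades to norm continuity on the open strip, where the spectral radius is $<1$. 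Everything else is a routine assembly of Proposition~\ref{hol1}, von~Neumann's inequality, and the Lemma.
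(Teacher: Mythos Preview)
Your approach coincides with the paper's: the corollary is presented there as the immediate consequence of the preceding Lemma together with Proposition~\ref{hol1} and the contractivity of the Sz.-Nagy--Foias calculus, without a separate proof. Your discussion of boundary continuity is more careful than what the paper spells out, and your observation that strong (hence weak) continuity on $\overline{\Si}$ suffices for the application via Proposition~\ref{AZL} is correct and to the point.

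One minor correction: the claim that on the open strip $-iA(z)$ is \emph{strictly} accretive, and hence $T(z)$ has spectral radius strictly less than $1$, is unwarranted --- the real part $b\,(\xi,P\xi)$ is positive but not bounded below by $c\|\xi\|^2$, since $P$ (multiplication by $e^x$ in the Schr\"odinger representation) has $0$ in its spectrum. This does no harm to your argument, however: analyticity on $\Si$ from Proposition~\ref{hol1} already yields norm continuity there, so only the real boundary requires the SOT argument you propose.
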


\section{End of proof of Theorem \ref{semichar}}\label{endproof}

\begin{proposition} \label{appb}
With the notions in Thm.\ \ref{semichar}, if $V$ is a unitary commuting with dilations such that $VH_1\subset H_1$, then $V=\f(K)$ with $\f\in\Hin$.
\end{proposition}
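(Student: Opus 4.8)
The plan is to reduce to an explicit model and extract the one-sided support condition on the inverse Fourier transform of $\f$ that characterizes membership in $\Hin$. Since $(H,T)$ is irreducible, the representation of the ``$ax+b$'' group generated by $T(t)=e^{itP}$ and the dilations $\Delta^{is}$ is the unique irreducible positive-energy one, so I may assume we are in the Schr\"odinger picture used in the Appendix: $\H=L^2(\RR)$, $K=i\,d/dx$ and $P$ the multiplication operator by $e^x$, whence $T(a)=e^{iaP}$ is multiplication by $e^{iae^x}$. By Lemma~\ref{lemV} the hypotheses already give $V=\f(K)$ with $\f$ a Borel function and $|\f|=1$ a.e.; in this picture $\f(K)$ is convolution by $f$, where $\f(s)=\int f(t)e^{ist}\,dt$. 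Recall also from Lemma~\ref{lemV} that $VH_1\subset H_1$ is equivalent to the operator-valued function $a\mapsto \f(K+aP)=T(a)VT(-a)$ admitting a bounded analytic continuation to the upper half-plane $\Si$, with boundary value $\f(K)$ at $a=0$. The goal $\f\in\Hin$ is, by Paley--Wiener, exactly the statement $\mathrm{supp}(f)\subset\RR^+$.

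I would extract this support condition by testing the analytic family against vectors. For $\xi,\eta$ test functions set
\[
\Phi(a):=(\xi,\,T(a)VT(-a)\,\eta)=\int\!\!\int \overline{\xi(x)}\,\eta(y)\,f(x-y)\,e^{ia(e^x-e^y)}\,dx\,dy .
\]
Changing variables to $u=e^x$, $w=e^y$ and then to $r=u-w$ rewrites this as an ordinary Fourier transform $\Phi(a)=\int_\RR G(r)e^{iar}\,dr=\hat G(a)$, where $G(r)=\int \overline{\xi(\log(w+r))}\,\eta(\log w)\,f\!\big(\log\tfrac{w+r}{w}\big)\,\tfrac{dw}{(w+r)w}$ integrated over $w>0$, $w+r>0$. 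The point of the change of variables is that the factor $e^{ia(e^x-e^y)}$ has modulus $e^{-\Im a\,(e^x-e^y)}$, which stays bounded on $\Si$ precisely on the region $x\ge y$; the hypothesis that $\Phi$ continues boundedly and analytically to $\Si$ therefore says $\hat G\in\Hin$, and Paley--Wiener forces $\mathrm{supp}(G)\subset\RR^+$, i.e.\ $G(r)=0$ for every $r<0$.

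It then remains to run this for a sufficiently rich family of $\xi,\eta$. For fixed $r<0$ the constraint is $w>-r$, and $\log\frac{w+r}{w}$ sweeps out all of $(-\infty,0)$ as $w>-r$ and $r<0$ vary; since the products $\overline{\xi(\log(w+r))}\,\eta(\log w)$ localize in $w$, the vanishing of $G(r)$ for all $\xi,\eta$ and all $r<0$ forces $f=0$ a.e.\ on $\RR^-$. Hence $\mathrm{supp}(f)\subset\RR^+$ and $\f\in\Hin$, as claimed.

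I expect the main obstacle to be purely technical and to lie in the low regularity of $\f$: being only unimodular in $L^\infty$, its inverse transform $f$ is merely a tempered distribution, so the Fubini interchange, the identification $\Phi=\hat G$, and the final localization must all be justified distributionally --- most cleanly via the Paley--Wiener--Schwartz theorem, or by first regularizing $V=\f(K)$ and removing the cutoff at the end. The conceptual heart, by contrast, is the clean observation that conjugating the Fourier multiplier $\f(K)$ by the complex shifts $T(a)$ turns boundedness on $\Si$ into a one-sided support condition on $f$; equivalently, it is the classical fact that an $L^\infty$ symbol multiplies the Hardy space $H^2(\Si)$ into itself if and only if it lies in $\Hin$.
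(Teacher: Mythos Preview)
Your argument is essentially correct and reaches the same conclusion, $\mathrm{supp}(f)\subset\RR^+$, but by a route different from the paper's. The paper first regularizes $\f$ by multiplying with $\hat j_n\in\Hin$ (using the already established implication $(ii)\Rightarrow(i)$), so that $f\in L^2$; then, working with matrix elements $(\eta,\f(K+aP)\xi)=\int f(t)\,(\eta,e^{it(K+aP)}\xi)\,dt$, it splits the $t$-integral into $F_+$ (over $t>0$) and $F_-$ (over $t<0$). Since each $e^{it\,\cdot}$ with $t>0$ is itself a symmetric inner function, the forward direction together with the Sz.-Nagy--Foias functional calculus (Cor.~\ref{analf}) gives a bounded analytic extension of $F_+$ to $\Si$; hence $F_-=F-F_+$ also extends to $\Si$, while a direct Schr\"odinger-model estimate shows $F_-$ extends to the lower half-plane as well. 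Liouville then forces $F_-\equiv 0$, so $f|_{\RR^-}=0$.

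By contrast, you bypass the $t$-splitting and the Liouville trick entirely: you change variables from $(x,y)$ to $(r,w)$ with $r=e^x-e^y$, recognize $\Phi(a)$ as an honest Fourier transform $\hat G(a)$, and invoke Paley--Wiener(--Schwartz) directly. This is more elementary in spirit---it does not reuse the $H^\infty$ functional calculus of Appendix~A---but it trades that machinery for an explicit density/localization argument at the end (extracting $f|_{\RR^-}=0$ from $G|_{\RR^-}=0$), and for the distributional bookkeeping you yourself flag. The paper's route, once the regularization is in place, makes all integrals classical and the vanishing of $f|_{\RR^-}$ immediate from $F_-=0$ at $a=0$; your route avoids the clever two-sided analyticity argument at the cost of a change of variables and a separating-vectors step. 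Both are valid; the regularization device $\f\to\hat j_n\f$ used in the paper would also be the cleanest way to handle your technical caveat.
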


\noindent
As already noted, since $(H,T)$ is assumed to be irreducible, $V$ must be of the form $V=\f(K)$ with $\f\in L^{\infty}(\RR)$, and we must show that $\f\in\Hin$. We prove this in a few steps. 
\medskip

\noindent $\bullet$ {\it We may assume that $\f$ rapidly decreases at $\pm\infty$.}
\smallskip

\noindent
Let $j_n$ be non-negative, smooth function on $\RR$ with integral 1 and supp$(j_n)\subset [0,1/n]$, $n\in\mathbb N$. 
Then the sequence $\{j_n\}$ is an approximate unit in $L^1(\RR)$ and $\hat j_n$ converges weakly to $1$ in $L^\infty(\RR)$ (in the $\sigma(L^\infty,L^1)$-topology). Set $\f_n\equiv \hat
j_n \f$; then each $\f_n$ rapidly decreases at $\pm\infty$ and their
sequence converges weakly to $\f$ in $L^1(\RR)$. Morever $\hat j_n\in\Hin$, so $\hat j_n(K)H_1\subset H_1$ by the proven implication $(ii)\Rightarrow(i)$ in Thm.\ \ref{semichar}. So $\f_n(K)H_1\subset H_1$.  

It is therefore sufficient to prove the claim $\f_n\in\Hin$ for
rapidly decreasing functions $\f_n$. Then also $\f\in\Hin$ follows,
because $\f_n\to \f$ and $\Hin$ is weakly closed in $L^\infty(\RR)$. 
 
\medskip

\noindent $\bullet$ {\it There is a dense linear space $\D\subset\H$ such that:

$(a_1)$ For every fixed $t>0$,
the function $a\in\RR\mapsto (\eta,e^{it(K+aP)}\xi)$ is the boundary value of a function in $\Hin$, continuous in $\overline{\Si}$;

$(a_2)$ For every fixed $z$ with $\Im z> 0$, the function
$t\in\RR^+\mapsto (\eta,e^{it(K+zP)}\xi)$ belongs to
$L^2$. }
\smallskip

\noindent
Since, for a fixed $t>0$, the function $\f_t: s\mapsto e^{its}$ is a symmetric inner function on $\Si$, by Lemma\ \ref{lemV} the operator-valued map $a\in\RR\mapsto e^{it(K+aP)}$ is bounded analytic on $\Si$, continuous in $\overline{\Si}$, because $e^{itK}H_1\subset H_1$. So the first statement $(a_1)$ follows.

To show $(a_2)$, we may work in the Schr\"odinger representation, so $\H=L^2(\RR, dx)$, $e^{itK}$ is the translation group and $P$ is the multiplication by $e^x$. Then
\begin{multline*}
e^{it(K+aP)} = T(a)e^{itK}T(-a) :\\
 \xi(x)\mapsto T(a)e^{itK}e^{-iae^x}\xi(x)
= T(a) e^{-iae^{x-t}}\xi(x-t) = e^{iae^x}e^{-iae^{x-t}}\xi(x-t) = e^{iae^x(1 - e^{-t})}\xi(x-t),
\end{multline*}
so
\[
(\eta,e^{it(K+aP)}\xi) = \int e^{iae^x(1 - e^{-t})}\xi(x-t)\overline{\eta(x)}dx \ .
\]
If $\xi,\eta$ are rapidly decreasing functions we then have
\[
|(\eta,e^{it(K+zP)}\xi)|
\leq \int |e^{ize^x(1 - e^{-t})}\xi(x-t)\eta(x)|dx
= \int e^{-be^x(1 - e^{-t})}|\xi(x-t)\eta(x)|dx
\leq |\xi|*|\tilde\eta|(t)
\]
where $z = a+ ib$, $b >0$, $\tilde\eta(t)=\eta(-t)$, and $*$ denotes the convolution product. The statement follows because $|\xi|*|\tilde\eta|$ is rapidly decreasing.
\smallskip

\noindent
In the following $\f\in L^{\infty}(\RR)$ is rapidly decreasing at infinity, $V= \f(K)$ maps $H_1$ into itself.
\medskip

\noindent $\bullet$ {\it Let $f$ be the Fourier anti-transform of $\f$. Then $f\in L^2(\RR)$ and
\[
(\eta,\f(K)\xi) = \int_{-\infty}^\infty f(t)(\eta,e^{itK}\xi)dt\ 
\]
for all vectors $\xi,\eta$ such that $t\mapsto (\eta,e^{itK}\xi)$ belongs to $L^2$, in particular for $\xi,\eta\in\D$.}
\smallskip

\noindent
This is obvious for all $f\in L^1$, and true in our case by an approximation argument.
\medskip

\noindent $\bullet$ {\it The map
\[
F_+: a\in\RR\mapsto \int_0^\infty f(t)(\eta,e^{it(K+aP)}\xi)dt
\]
admits a bounded analytic continuation in $\Si$ for all $\xi,\eta\in\D$.
}
\smallskip

\noindent
By the above point we may define
\[
F_+(z) = \int_0^\infty f(t)(\eta,e^{it(K+zP)}\xi)dt
\]
for $\Im z>0$ providing the desired analytic continuation.
\medskip

\noindent $\bullet$ {\it The map
\[
F_-: a\in\RR\mapsto \int_{-\infty}^0 f(t)(\eta,e^{it(K+aP)}\xi)dt
\]
admits a bounded analytic continuation in $\Si$ for all $\xi,\eta\in\D$.
}
\smallskip

\noindent
This is now immediate because $F_- = F - F_+$ where $F(a)=
(\eta,\f(K+aP)\xi)$ admits an analytic continuation in $\Si$ because
$\f(K)H_1\subset H_1$. 
\medskip

\noindent $\bullet$ {\it Conclusion.
}
\smallskip

\noindent
We have 
\[
F_-(a)= \int_{-\infty}^0 f(t)(\eta,e^{it(K+aP)}\xi)dt = 
-\int_0^{\infty} f(-t)(\eta,e^{-it(K+aP)}\xi)dt
\]
and by the above argument $F_-$ admits a bounded analytic continuation in the lower half-plane $-\Si$. So $F_-$ is constant by Liouville theorem. So $F= F_+$ plus a constant that must be zero by $L^2$-integrability.
This means that supp$(f)\subset [0,\infty)$, so $\f\in\Hin$.

This completes the proof of Prop.\ \ref{appb}, and thus the
conclusion $(i)\Rightarrow (ii)$ in Thm.\ \ref{semichar}.

\bigskip

\noindent
{\bf Acknowledgements.} R.L. is grateful to E. Witten for stimulating comments and 
D. Voiculescu for pointing out ref.\ \cite{NF}.

\end{document}